\documentclass[runningheads]{llncs}
\usepackage[T1]{fontenc}
\usepackage{graphicx}
\usepackage{comment}

\usepackage{algorithm}
\usepackage[noend]{algpseudocode}

\usepackage[T1]{fontenc}
\usepackage[utf8]{inputenc}
\usepackage{lmodern}

\usepackage{hyperref}

\usepackage{amsmath}
\usepackage{cite}
\usepackage{cleveref}

\usepackage{amssymb}
\usepackage{amsfonts}
\usepackage{mathrsfs}
\usepackage{wrapfig}

\usepackage{mathtools} % write text [under]{over} arrow $\xrightarrow[\text{world}]{\text{hello}}$

\newtheorem{observation}{Observation}

\usepackage{appendix}

\usepackage{lineno}
\begin{document}
\title{Round-Delayed Amnesiac Flooding\thanks{A preliminary version of this paper appeared at the Proceedings of ALGOWIN 2025, pages 32-45 (\textbf{Best paper award})~\cite{AlafinMS25}.}}
%
%\titlerunning{Abbreviated paper title}
% If the paper title is too long for the running head, you can set
% an abbreviated paper title here
%
\author{Oluwatobi Alafin\inst{1}%\orcidID{} 
\and
George B. Mertzios\inst{2}\thanks{Supported by the EPSRC grant EP/P020372/1.}\orcidID{0000-0001-7182-585X} 
\and
Paul G. Spirakis\inst{1}\thanks{Supported by the EPSRC grant EP/P02002X/1.}\orcidID{0000-0001-5396-3749}}
\authorrunning{Oluwatobi Alafin, George B. Mertzios, and Paul G. Spirakis}
% First names are abbreviated in the running head.
% If there are more than two authors, 'et al.' is used.
%
\institute{Department of Computer Science, University of Liverpool, UK \email{o.f.alafin@liverpool.ac.uk}, \ \email{p.spirakis@liverpool.ac.uk}\and
Department of Computer Science, Durham University, UK \email{george.mertzios@durham.ac.uk}}
\maketitle              % typeset the header of the contribution
\begin{abstract}
We present a comprehensive analysis of Round-Delayed Amnesiac Flooding (RDAF), a variant of Amnesiac Flooding that introduces round-based asynchrony through adversarial delays. We establish fundamental properties of RDAF, including termination characteristics for different graph types and decidability results under various adversarial models. Our key contributions include: (1) a formal model of RDAF incorporating round-based asynchrony, (2) a proof that flooding always terminates on acyclic graphs despite adversarial delays, (3) a construction showing non-termination is possible on any cyclic graph, (4) a demonstration that termination is undecidable with arbitrary computable adversaries, and (5) the introduction of Eventually Periodic Adversaries (EPA) under which termination becomes decidable. These results enhance our understanding of flooding in communication-delay settings and provide insights for designing robust distributed protocols.

% \vspace{0.2cm}

% \noindent \textbf {This paper is eligible for the best student paper award. Due to space constraints, some proofs and details are deferred to a clearly marked appendix to be read at the discretion of the program committee.}

\keywords{flooding protocol \and  amnesiac flooding \and asynchronous protocol}
\end{abstract}
\section{Introduction}\label{intro-sec}

Flooding algorithms \cite{Aspnes2019} serve as fundamental primitives in distributed computing for information dissemination, with applications ranging from network discovery to emergency broadcast systems. While traditional flooding maintains message histories to prevent redundant transmissions \cite{AttiyaWelch2004}, such approaches become impractical in resource-constrained environments like sensor networks, IoT devices, or networks with high churn rates where maintaining consistent state is challenging.

Amnesiac Flooding (AF) addresses these limitations by eliminating message history, requiring nodes to make forwarding decisions based solely on current information \cite{hussak2019termination}. However, existing AF analyses assume perfect synchrony—an unrealistic assumption in practical networks where delays, failures, and asynchrony are the norm rather than the exception.

This paper introduces \textit{Round-Delayed Amnesiac Flooding (RDAF)}, which bridges the gap between theoretical AF models and practical network conditions. RDAF maintains the memory-efficiency of amnesiac approaches while incorporating realistic asynchronous behaviour through adversarial delays; thus, our model can be also termed \textit{Round-Delayed Amnesiac Flooding (RDAF)}.
Our key insight is that even with minimal state and adverse conditions, we can characterise precise conditions under which flooding terminates, providing both positive results (guaranteed termination in acyclic networks) and fundamental limitations (undecidability with arbitrary adversaries).

The significance of our results extends beyond flooding protocols. By establishing when termination analysis becomes undecidable and identifying restricted adversary models (Eventually Periodic Adversaries) where it remains decidable, we contribute to the broader understanding of computability limits in asynchronous distributed systems.

While our model makes specific assumptions (such as nodes detecting blocked edges), these capture realistic scenarios and enable rigorous analysis of fundamental limits. The dichotomy between acyclic and cyclic graphs, and between arbitrary and periodic adversaries, reveals deep structural properties that inform the design of practical flooding protocols.

\subsection{Model Context and Novelty}

RDAF occupies a unique position in the spectrum of distributed system models. Unlike classical asynchronous models that allow arbitrary message delays and reorderings\cite{AttiyaWelch2004}, or partially synchronous models that impose eventual bounds on communication delays\cite{DworkLynch1988}, RDAF maintains a synchronous round structure while allowing adversarial edge-level asynchrony within rounds.

\subsubsection*{Adversarial Model Justification}
A key aspect of our model is that nodes are aware of which outgoing edges are currently delayed by the adversary. While this may initially seem like a strong assumption, it captures several practical scenarios:
\begin{itemize}
    \item \textbf{Failed transmission detection}: In many network protocols, nodes receive acknowledgments or can detect transmission failures through timeout mechanisms or carrier sensing.
    \item \textbf{Scheduled maintenance}: In managed networks, nodes may be informed of temporary link unavailability due to scheduled maintenance or known congestion patterns.
    \item \textbf{Visible network conditions}: In wireless networks, nodes can often detect poor channel conditions or interference that prevents successful transmission.
\end{itemize}

This modeling choice allows us to study the fundamental limits of flooding under adversarial conditions while maintaining some feedback about the network state. Alternative models where nodes lack this information would require additional mechanisms (such as acknowledgments or timeouts) that would fundamentally change the nature of the flooding protocol.

\subsubsection*{Memory Model and Amnesiac Nature}
The term ``amnesiac'' in our context requires careful interpretation. Traditional amnesiac flooding assumes that nodes retain \emph{no state} between rounds. Our variant relaxes this to what we call \emph{structured amnesia}: nodes forget all message history but maintain a bounded amount of state (destination sets) that is recomputed based on current round information. Specifically:
\begin{itemize}
    \item Nodes do not remember which nodes they have received messages from in previous rounds.
    \item Nodes only maintain destination sets that are \emph{functionally determined} by the current round's receipts and delays.
\end{itemize}

This structured amnesia is motivated by resource-constrained environments where maintaining full message history is infeasible, but nodes can afford $O(|N(v)|)$ memory for immediate forwarding decisions, where $N(v)$ denotes the set of neighbours of node $v$. This represents a middle ground between full amnesia and traditional flooding.

\subsection{Our Contributions}

This paper makes several significant contributions:

\begin{enumerate}
    \item \textbf{Formal Model}: A comprehensive mathematical framework for RDAF, including precise definitions for system state, adversarial delay functions, and termination conditions.

    \item \textbf{Termination Analysis}: Proof that RDAF always terminates on acyclic graphs with provable bounds, and demonstration that any cyclic graph admits non-termination.

    \item \textbf{Decidability Results}: Proof of undecidability for arbitrary computable adversaries via reduction from the halting problem, and introduction of the Eventually Periodic Adversary (EPA) model under which termination becomes decidable.
\end{enumerate}

\subsection{Organisation}

Section \ref{sec-related} discusses related work. Section \ref{sec-prelim} presents our formal model and defines key properties. Section \ref{sec-term-dichotomy} demonstrates non-termination in cyclic graphs. Section~\ref{sec:PIS} develops the theory of periodic infinite schedules, providing a framework for analysing recurrent behaviour. Section \ref{sec-undecidability} proves undecidability for arbitrary computable adversaries. Section \ref{sec:EPA} introduces Eventually Periodic Adversaries (EPA), establishes decidability and provides complexity bounds. 
% In the full version of the paper we show that all non-terminating schedules in the EPA model are periodic infinite schedules. 
% The appendix also provides detailed proofs, worked examples demonstrating schedule evolution, and extensions of our treatment of recurrent behaviour to zero-density and generalised periodic schedules.

% \subsection{Technical Preliminaries}

% Throughout, we work with simple, finite, connected graphs $G = (V,E)$ with node set $V$ and edge set $E$. We assume a single source node $g_0 \in V$ with initial message $m_0$. 
% For any graph $G$ and nodes $u,v \in V$, we denote by $\text{dist}(G)(u,v)$ the length of the shortest path between $u$ and $v$ in $G$.

\section{Related Work}\label{sec-related}

Amnesiac Flooding originated with Hussak and Trehan's work~\cite{HussakTrehan2020b}, establishing fundamental properties in synchronous settings. Their analysis proved termination bounds---exactly $e$ rounds for bipartite and between $e$ and $e+d+1$ rounds for non-bipartite graphs (where $e$ is source eccentricity, $d$ is graph diameter)---demonstrating AF's asymptotic time optimality against the $\Omega(d)$ broadcast lower bound.

Turau~\cite{turau2020analysis} revealed deeper complexity aspects through the $(k,c)$-flooding problem: finding $k$ nodes to guarantee termination within $c$ rounds under concurrent flooding. Its NP-completeness highlighted inherent optimisation challenges. Sharp bounds showed significant disparities between bipartite and non-bipartite graphs, introducing a behaviour-preserving construction mapping between them.

Hussak and Trehan extended their analysis~\cite{HussakTrehan2023} to multi-source scenarios, proving $e(I)$-round termination for $I$-bipartite graphs with source set $I$. Their fixed-delay analysis showed termination by round $2d+\tau-1$ for single-edge delays of duration $\tau$ in bipartite graphs and established termination for multiple-edge fixed delays in cycles.

Bayramzadeh et al.~\cite{BayramzadehEtAl2021} proved termination for multiple-message AF in the unranked full-send case, previously conjectured non-terminating, showing $D\cdot(2k-1)$ rounds for bipartite and $(2D+1)\cdot(2k-1)$ rounds for non-bipartite graphs (a global total of $k$ messages). Their introduction of graph diameter knowledge as a parameter suggested new model variants.

\textbf{Comparison with Asynchronous Flooding Models.}
Hussak and Trehan~\cite{HussakTrehan2020b} briefly discuss an asynchronous variation of amnesiac flooding, and they demonstrate with a small example that this variation does not guarantee termination, in contrast to their synchronous model. 
In this %asynchronous message passing 
model of \cite{HussakTrehan2020b}, 
%the computation still proceeds in global synchronous rounds (similarly to our model), but 
the adversary can decide a delay of message delivery on any link. Once a node sends a message, this message will definitely be delivered at some future round. 
In contrast, in our model, if a message from node $u$ to node $v$ is delayed by the adversary, the initiator node $u$ keeps a note of this and tries to re-send the message to $v$ again and again, until either (i) the message is delivered to $v$, or (ii) $u$ receives the message from $v$, in which case $u$ stops trying to send the message to $v$.

The possibility of a message not being delivered, due to the last case, makes non-termination in our model much less trivial, compared to \cite{HussakTrehan2020b}. 
We comprehensively investigate this model and we provide a periodic-schedule normal form for it: every infinite execution can be compressed into an ultimately periodic delay pattern, which in turn enables decidability and undecidability results. 
% More specifically, termination is \emph{undecidable} in the full adaptive model, yet becomes decidable under our \emph{eventually-periodic adversary} (EPA) restriction. 
Summarizing, our model is complementary to the model of \cite{HussakTrehan2020b}, and our work provides a rigorous framework that delineates the exact boundary between terminating and non-terminating behaviour.
% which complements the illustrative constructions of \cite{HussakTrehan2020b} in their model.
% extensions turn the illustrative constructions of \cite{HussakTrehan2020b} into a rigorous framework and delineate the exact boundary between terminating and non-terminating behaviour.

\textbf{Comparison with Stateless Flooding Approaches.}
The stateless flooding algorithm by Adamek et al.~\cite{adamek2017} achieves statelessness through a different mechanism than our structured amnesia. Their algorithm uses send queues where messages are discarded upon encountering ``mates''—pairs of messages with swapped sender/receiver addresses. Crucially, their model assumes fair scheduling where every queued message is eventually transmitted or removed, without adversarial interference. This synchronous assumption fundamentally differs from our round-delayed model where an adversary controls edge availability. 

While Adamek et al. prove termination under fair scheduling, we establish when termination remains decidable despite adversarial delays (EPA model) or becomes undecidable (arbitrary computable adversaries). Our structured amnesia—recomputing destination sets based on current round information—provides a framework for analysing flooding under hostile scheduling conditions that previous stateless approaches did not consider.

\section{Model, Preliminaries and Notation}\label{sec-prelim}

In this section, we present the computational model for round-based asynchronous systems, 
and we describe the RDAF protocol that operates within this model. 
We summarize all our notation in Table~\ref{notation-table} below, while the individual symbols are explained in detail later in this section.

\begin{table}[h]
\centering
\caption{Summary of Notation and Terminology}
\begin{tabular}{|l|l|p{0.5\textwidth}|}
\hline
\textbf{Symbol} & \textbf{Type} & \textbf{Description} \\
\hline
$G = (V,E)$ & Graph & Simple, finite, connected graph with vertices $V$, edges $E$ \\
\hline
% $\text{dist}(G,u,v)$ & Natural & Length of the shortest path between nodes $u$ and $v$ in $G$\\
% \hline
$g_0$ & Node & Source node initiating flooding \\
\hline
$d$ & Function & $d: \mathbb{N} \times V \times E \to \{0,1\}$ adversary's delay function \\
\hline
$M(j,u)$ & Boolean & Message indicator for node $u$ at round $j$ \\
\hline
$s(j,u)$ & Set & Sources delivering to node $u$ in round $j$ \\
\hline
$\text{dest}(j,u)$ & Set & Destinations for node $u$ in round $j+1$ \\
\hline
$S(j,u)$ & Triple & State record $(M(j,u), s(j,u), \text{dest}(j,u))$ \\
\hline
$r(j)$ & Set & Subset of $V \times E$ representing transmissions in round $j$ \\
\hline
$\sigma(j)$ & Pair & Complete system configuration $[(S(j, \cdot), r(j))]$ at round $j$ (Definition~\ref{def:configuration}) \\
\hline
$c$ & Natural & Stabilisation round for periodic delay functions (Definition~\ref{def:periodic-delay}) / periodic infinite schedules (Definition~\ref{def:PIS})) \\
\hline
$l$ & Natural & Cycle length for periodic delay functions (Definition~\ref{def:periodic-delay}) \\
\hline
%%% $T(n)$ & Ratio & Transmission density up to round $n$ (Definition \ref{def:density}) \\
%%% \hline
\end{tabular}
\label{notation-table}
\end{table}

\subsection{Computational Model}

\textbf{Network Structure.} We consider a simple, finite, connected graph $G = (V,E)$ with a distinguished source node $g_0 \in V$ possessing initial message $m_0$. For every node $v$ we denote by $N(v)$ the set of neighbours of $v$.

\textbf{Round-Based Asynchrony.} The system proceeds in synchronous rounds, but an adversary can selectively make edges unavailable for message transmission. Formally:

\begin{definition}[Adversarial Delay Function]
A delay function $d: \mathbb{N} \times V \times E \to \{0,1\}$ specifies for each round $j$, node $v$, and incident edge $e = \{v,u\}$ whether transmission from $v$ along $e$ is blocked ($d(j,v,e) = 1$) or allowed ($d(j,v,e) = 0$).
\end{definition}

\begin{definition}[Finite Delay Property]
A delay function $d$ satisfies the finite delay property if for every node-edge pair $(v,e)$, any sequence of consecutive rounds where $d(j,v,e) = 1$ is finite. Formally, for all $v \in V$ and incident edges $e$, if $d(j,v,e) = 1$ for $j \in [t_1, t_2]$, then $t_2 - t_1$ is finite.
\end{definition}

\textbf{Key Model Assumption.} Nodes are aware of which of their incident edges are currently blocked. This models scenarios where transmission failures are detectable (e.g., through carrier sensing, acknowledgments, or network management protocols).

\subsection{The RDAF Protocol}

Within the above model, we define the Round-Delayed Amnesiac Flooding protocol.

\textbf{Node State.} Each node $v$ maintains:
\begin{itemize}
    \item $M(j, v) \in \{0,1\}$: whether $v$ possesses the message in round $j$.
    \item $s(j, v) \subseteq V$: source set - neighbours that successfully delivered the message to $v$ in round $j$.
    \item $\text{dest}(j, v) \subseteq V$: destination set - neighbours to which $v$ intends to forward the message.
\end{itemize}

\textbf{Structured Amnesia.} The protocol is called ``amnesiac'' because:
\begin{itemize}
    % \item Nodes do not maintain message identifiers or history
    \item Nodes do not maintain history: i.e.,~nodes don't remember which nodes they received messages from (or sent messages to) in previous rounds.
    \item The destination set is functionally determined by recent receptions and current delays.
    \item When destination sets empty and no delayed transmissions remain, nodes return to their initial state.
\end{itemize}

The key insight is that while nodes maintain destination sets across rounds (not strictly amnesiac), this state is \emph{recomputed} based on current information rather than accumulated history. When a node receives the message from new sources, it completely recomputes its forwarding strategy.
\medskip

\textbf{Protocol Operation.} Each round $j$ proceeds as follows:

\begin{enumerate}
\item \textbf{Delay Phase}: The adversary specifies $d(j,v,e)$ for all node-edge pairs.
\item \textbf{Transmission Phase}: Each node $v$ with $M(j-1,v) = 1$ attempts to transmit to all $u \in \text{dest}(j-1,v)$ where $d(j,v,\{v,u\}) = 0$.
\item \textbf{Reception Phase}: Nodes receive messages from successful transmissions.
\item \textbf{State Update Phase}: Nodes update their state according to the following rules:
\end{enumerate}

\textbf{State Update Rules.} For node $u$ transitioning from round $j$ to $j+1$:

\textbf{Message Possession:}
\[ M(j+1, u) = \begin{cases}
1, & \text{if } u \text{ has pending delayed transmissions from round } j, \\
1, & \text{if } u \text{ receives the message in round } j+1, \\
0, & \text{otherwise}.
\end{cases} \]

\textbf{Source Set:} 
\[ s(j+1, u) = \{v \in V : \{v,u\} \in E, u \in \text{dest}(j, v), d(j+1, v, \{v,u\}) = 0\}. \]

\textbf{Destination Set:}
The update rule for destination sets captures the ``structured amnesia'':
\[ \text{dest}(j+1, u) = \begin{cases}
\{v \in \text{dest}(j, u) : d(j+1, u, \{u,v\}) = 1\}, & \text{if continuing delayed transmission,} \\
N(u) \setminus s(j+1, u), & \text{if newly receiving message,} \\
\emptyset, & \text{if no message possessed.}
\end{cases} \]

\begin{observation}[Persistence of Destination Sets (PDS)]
\label{obs:pds} Let $\{u, v\} \in E$. If $u$ receives in round $j$ from $w\neq v$ and does not receive from $v$ in the same round, then $v$ belongs to $\text{dest}(u)$ until $u$ delivers to $v$ or receives from $v$.

% \begin{proof}[Proof Sketch]
% Destination sets only change on successful transmission or message receipt, neither of which occurs by assumption. See \hyperref[lem:pds-app]{Appendix \ref{app:proofs}} for the complete proof.
% \end{proof}
\end{observation}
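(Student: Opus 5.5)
We argue by induction on rounds, using only the three-case update rule for $\text{dest}$ and the rule for $s$. Suppose $u$ receives in round $j$ from some $w\neq v$ and not from $v$, so $w\in s(j,u)$ and $v\notin s(j,u)$. Since $u$ is newly receiving, the second case of the destination rule gives $\text{dest}(j,u)=N(u)\setminus s(j,u)$. Because $\{u,v\}\in E$ and $v\notin s(j,u)$, we get $v\in\text{dest}(j,u)$, which is the base case. Let $t>j$ be the first round in which $u$ delivers to $v$ (that is, $v\in\text{dest}(t-1,u)$ and $d(t,u,\{u,v\})=0$) or $u$ receives from $v$ (that is, $v\in s(t,u)$). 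The induction claim is that $v\in\text{dest}(k,u)$ for all $j\le k<t$.

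For the inductive step, assume $v\in\text{dest}(k,u)$ with $k+1<t$, and consider how $\text{dest}(k+1,u)$ is formed.

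\emph{Case (a): $u$ is newly receiving in round $k+1$.} Then $\text{dest}(k+1,u)=N(u)\setminus s(k+1,u)$. Since $k+1<t$, $u$ does not receive from $v$, so $v\notin s(k+1,u)$ and hence $v\in\text{dest}(k+1,u)$.

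\emph{Case (b): $u$ is continuing delayed transmission.} Then $\text{dest}(k+1,u)=\{x\in\text{dest}(k,u): d(k+1,u,\{u,x\})=1\}$. By hypothesis $v\in\text{dest}(k,u)$. Since $u$ does not deliver to $v$ in round $k+1$, while $M(k,u)=1$ and $v\in\text{dest}(k,u)$, the only possible reason is that the edge was blocked, i.e.\ $d(k+1,u,\{u,v\})=1$. Hence $v$ survives in the filter.

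\emph{Case (c): $u$ possesses no message.} This cannot happen. Because $v\in\text{dest}(k,u)$ is still pending and blocked, $u$ has a pending delayed transmission, so the first case of the $M$ rule gives $M(k+1,u)=1$, and the empty-set case is excluded.

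The main obstacle is that the case conditions of the update rules are stated informally, with phrases such as ``continuing delayed transmission'' and ``pending delayed transmissions''. Two points need to be fixed before the case analysis is rigorous. First, $M(k,u)=1$ must follow from $v\in\text{dest}(k,u)$, so that non-delivery in Case (b) really forces $d=1$. I would establish this as the auxiliary invariant $\text{dest}(k,u)\neq\emptyset\Rightarrow M(k,u)=1$, proved alongside the main induction. Second, when $u$ both has pending transmissions and newly receives, the rule should be read as giving priority to the ``newly receiving'' case. Either reading keeps $v$ in the set, since Case (a) retains $v$ whenever $v\notin s(k+1,u)$, so the conclusion holds regardless.

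Finally, the invariant stops exactly at $t$, and this matches the statement. If $u$ receives from $v$, the recomputation $N(u)\setminus s(t,u)$ removes $v$. If $u$ delivers to $v$, then $d(t,u,\{u,v\})=0$ and the filter in Case (b) drops $v$.
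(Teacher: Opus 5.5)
Your proposal is correct and follows the same route as the paper. The paper's sketched justification is that $\text{dest}(u)$ changes only on a successful delivery or a receipt, and neither happens before round $t$; your round-by-round induction over the three update cases, with the auxiliary invariant $\text{dest}(k,u)\neq\emptyset\Rightarrow M(k,u)=1$, makes that argument rigorous. One small correction: your ``either reading'' remark holds only for the inductive step, because the base case $v\in\text{dest}(j,u)$ does need recompute-on-receipt to take priority (otherwise a $u$ whose pending set omits $v$ would keep that set, and the statement fails), so this reading is forced by the statement rather than optional.
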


A key property of the protocol is that when $u$ receives the message from new sources, it \emph{recomputes} its destination set as all neighbours except those that just delivered the message, effectively ``forgetting'' its previous forwarding intentions.

\subsection{System Evolution and Termination}

\textbf{State Function.} The system state is captured by $S: \mathbb{N} \times V \to \text{StateRecord}$ where $S(j,u) = (M(j,u), s(j,u), \text{dest}(j,u))$.

\textbf{Round Function.} The transmission function $r(j)$ records actual message transmissions in round $j$:
\[ r(j) = \{(v, \{v,w\}) : v \in V, w \in \text{dest}(j-1, v), d(j, v, \{v,w\}) = 0\}. \]

\medskip
\noindent\textbf{State Update Function.}
\label{def:state-update}
The state update function defines how node states evolve. For node $v$ in round $j$, computing the next state requires:
\begin{itemize}
    \item Current graph state $S(j,\cdot): V \to \text{StateRecord}$.
    \item Current delay decisions $d(j,\cdot,\cdot): V \times E \to \{0,1\}$.
    \item Node's local state $S(j,v)$ and incident delays $d(j,v,\cdot)$.
\end{itemize}

While updates depend on graph-wide state and delays, these parameters remain fixed when computing individual node updates in a given round. Thus, we can express the state update as:

\(S(j+1, v) := u(v, S(j, v), d(j, v, \cdot))\).

where $u$ implicitly references the global state $S(j,\cdot)$ and delays $d(j,\cdot,\cdot)$ fixed for round $j$.

\textbf{Initial Configuration.} At round zero:
\begin{itemize}
    \item Source: $S(0, g_0) = (1, \emptyset, N(g_0))$.
    \item Others: $S(0, u) = (0, \emptyset, \emptyset)$ for $u \neq g_0$.
\end{itemize}

\medskip
\noindent\textbf{Termination.} 
\label{def:termination}
We say that flooding has terminated ``by'' (at or before) round $t\in \mathbb{N}$ if $M(t, v) = 0$ for every $v\in V$, i.e.,~no node $v$ has the message at round $t$. Clearly, this is equivalent with saying that $M(j, v) = 0$ for every $j\geq t$ and for every $v \in V$, i.e.,~if flooding has terminated by round $t$ then no node has the message in any round after round $t$. 

\begin{definition}%[Termination Round]
\label{def:termination-round}
The termination round $t_{\text{min}}$ is defined as:
\[ t_{\text{min}} = \min\{t \in \mathbb{N} : M(t, u) = 0, \text{ for every } u \in V\}. \]
% \[ t_{\text{min}} = \min\{t \in \mathbb{N} : \forall k > t : r(k) = \emptyset \land \forall u \in V: M(k, u) = 0\}. \]
\end{definition}

% $\forall k > t : r(k) = \emptyset \iff \forall u \in V: M(k, u) = 0 \iff dest(k, u) = \emptyset\}$

\begin{observation}[Persistence of Empty Destination Sets (PEDS)]
\label{obs:PEDS}
If a node's destination set is empty at round $t$, it remains empty in all subsequent rounds until the node receives the message.
\end{observation}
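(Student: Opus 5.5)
The plan is a direct induction on rounds, driven by the destination-set update rule. Fix a node $u$ and a round $t$ with $\text{dest}(t,u)=\emptyset$. Let $t' > t$ be the first round in which $u$ receives the message, i.e., $s(t',u)\neq\emptyset$; if there is no such round, set $t'=\infty$. I would show by induction on $j$ that $\text{dest}(j,u)=\emptyset$ for every $j$ with $t\le j<t'$. The base case $j=t$ is the hypothesis.

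For the inductive step, assume $\text{dest}(j,u)=\emptyset$ and $j+1<t'$, and split according to the three branches of the destination-set update rule.
\begin{itemize}
\item In the \emph{newly receiving} branch, $u$ must receive the message in round $j+1$. This contradicts the choice of $t'$, so this branch does not apply.
\item In the \emph{continuing delayed transmission} branch, $\text{dest}(j+1,u)=\{v\in\text{dest}(j,u): d(j+1,u,\{u,v\})=1\}$. This is a subset of $\text{dest}(j,u)=\emptyset$, hence empty.
\item In the \emph{no message possessed} branch, the set is $\emptyset$ by definition.
\end{itemize}
In every applicable case $\text{dest}(j+1,u)=\emptyset$, which completes the induction.

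The only delicate point is that the case distinction in the update rule is given informally. I must argue that, when no reception occurs, the rule can never re-populate the destination set. Concretely, the only branch that introduces vertices not already in $\text{dest}(j,u)$ is the one producing $N(u)\setminus s(j+1,u)$, and it is triggered exactly by a reception. It also helps to note that in this situation $M(j+1,u)=0$: $u$ has no pending delayed transmissions, since its destination set is empty, and it receives nothing. So the third branch is the one that actually applies, giving $\text{dest}(j+1,u)=\emptyset$ directly.

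Pinning down the first point, that only a reception can re-populate the destination set, is the step I expect to require the most care. The rest is routine.
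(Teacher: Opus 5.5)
Your proposal is correct. The induction over the three branches of the destination-set update rule does the job: the newly-receiving branch is excluded by the minimality of $t'$, and the other two branches yield either a subset of $\emptyset$ or $\emptyset$ itself. The paper states this as an observation with no written proof, treating it as immediate from those same update rules, so your argument is simply the explicit version of the reasoning it takes for granted.
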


\begin{lemma}[Empty Destination Sets and Termination]
\label{lemma:empty-dest-term}
All destination sets are empty at round $t$ if and only if flooding has terminated by round $t$.
\end{lemma}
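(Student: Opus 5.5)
The plan is to prove the two directions separately. Both rest on a per-node invariant that links message possession to the destination set: for every round $j$ and node $u$, $M(j,u)=1$ if and only if $\text{dest}(j,u)\neq\emptyset$. At the level of individual rounds this is not quite literal: a node can receive the message from all of its neighbours at once, in which case $\text{dest}(j,u)=N(u)\setminus s(j,u)=\emptyset$ while $M(j,u)=1$. So I will state and use the invariant in the form the lemma actually needs. Whenever $M(j,u)=1$ and $\text{dest}(j,u)=\emptyset$, the node has nothing left to send, so it contributes no transmissions in round $j+1$ and returns to $M(j+1,u)=0$ unless it receives again. This is exactly the ``return to initial state'' clause of structured amnesia.

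\emph{($\Leftarrow$)} Suppose flooding has terminated by round $t$, so $M(t,v)=0$ for every $v$. By the third case of the destination-set update rule, a node that possesses no message has $\text{dest}(t,v)=\emptyset$. Hence every destination set is empty at round $t$. This direction is immediate from the definitions.

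\emph{($\Rightarrow$)} Suppose $\text{dest}(t,v)=\emptyset$ for all $v$. The argument has three steps.
\begin{enumerate}
\item No transmissions occur in round $t+1$. Indeed, $r(t+1)$ only contains pairs $(v,\{v,w\})$ with $w\in\text{dest}(t,v)$, and all these sets are empty, so $r(t+1)=\emptyset$ and $s(t+1,u)=\emptyset$ for all $u$.
\item No node has pending delayed transmissions, since these are by definition elements of the (empty) destination sets. No node receives in round $t+1$ either. So the message-possession rule gives $M(t+1,u)=0$ for every $u$.
\item Every node now has an empty destination set and receives nothing. By Observation~\ref{obs:PEDS} together with an induction on $j\ge t$, all destination sets stay empty, $r(j)=\emptyset$, and $M(j,u)=0$ for all $j\ge t+1$.
\end{enumerate}

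The main obstacle is reconciling this with the definition of termination ``by round $t$'', which requires $M(t,v)=0$ exactly at round $t$, not merely at $t+1$. At round $t$, a node that has just received from all of its neighbours has $M(t,u)=1$ but an empty destination set. I would handle this in one of two ways. The first is to argue that, in the paper's convention, possession at round $t$ with nothing to forward is reset to $0$ in the same state-update step; this follows from the clause ``when destination sets empty and no delayed transmissions remain, nodes return to their initial state''. The second, fallback option is to read $M(t,u)=1$ as meaning that $u$ has pending (nonempty) destinations, which makes the invariant $M(t,u)=1\iff\text{dest}(t,u)\ne\emptyset$ hold by definition. I would adopt the first reading, state it explicitly as a consequence of the amnesiac reset, and then conclude that $M(t,v)=0$ for all $v$. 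Combined with the induction in the third step, this shows flooding has terminated by round $t$.
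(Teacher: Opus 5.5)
Your Steps 1--3 for $(\Rightarrow)$ match the paper's argument: Observation~\ref{obs:PEDS}, plus the fact that a receipt needs a nonempty destination set in the previous round. Your diagnosis of the obstacle is also correct. Under the formal update rules, the $(\Rightarrow)$ direction read literally is false. Take the single edge $\{g_0,a\}$ with no delays. At round $1$, $\text{dest}(1,g_0)=\text{dest}(1,a)=\emptyset$, yet $M(1,a)=1$ because $s(1,a)=\{g_0\}=N(a)$. So flooding has not terminated by round $1$ (here $t_{\min}=2$), and any leaf that receives the message behaves the same way. The paper's proof does not address this. It only concludes that no transmissions occur after round $t$, so it effectively proves termination in the sense $M(k,\cdot)\equiv 0$ for all $k>t$; this is also how termination is invoked in the proof of Theorem~\ref{thm:cyclic-nonterm}.

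The gap in your proposal is the final step. The same-round reset does not \emph{follow} from the informal ``return to initial state'' sentence, because the formal possession rule explicitly sets $M(t,u)=1$ whenever $u$ receives in round $t$. If you want that reset, you must impose it as an explicit modelling convention. It then makes $M(j,u)=1\iff\text{dest}(j,u)\neq\emptyset$ exact and both directions become one-liners. It changes no transmission but can lower $t_{\min}$ by one. What you cannot do is present it as derived.

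The repair that matches the paper keeps the model and adopts the $k>t$ reading (equivalently, require $M(t+1,\cdot)\equiv 0$). Your Steps 1--3 then already finish $(\Rightarrow)$. However, your $(\Leftarrow)$ no longer applies as written, since $M(t,\cdot)\equiv 0$ is not available. Replace it with a one-round argument:
\begin{itemize}
\item If $w\in\text{dest}(t,v)$, then $M(t,v)=1$, since every way a destination set can be nonempty forces possession.
\item In round $t+1$, if $d(t+1,v,\{v,w\})=1$, then $v$ still has a pending transmission and $M(t+1,v)=1$.
\item Otherwise the edge is open and $w$ receives, so $M(t+1,w)=1$.
\end{itemize}
Either way this contradicts $M(t+1,\cdot)\equiv 0$. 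The paper reaches the same contradiction through the finite delay property, which is more than is needed. Under the literal definition, and also under your reset convention, your $(\Leftarrow)$ via the third case of the destination rule is correct and more elementary than the paper's. It just does not carry over to the $k>t$ reading.
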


\begin{proof}
($\Rightarrow$) Let $t$ be a round where $\text{dest}(t, v) = \emptyset$ for all $v \in V$. By Observation~\ref{obs:PEDS}, $\text{dest}(j, v) = \emptyset$ for all $j > t$, all $v \in V$, unless some node receives a message. However, receiving a message requires a non-empty destination set in some previous round, contradicting our assumption about $t$. Therefore, no transmissions can occur after round $t$.

($\Leftarrow$) If flooding has terminated by round $t$, then for all $j > t$, $r(j) = \emptyset$ and $M(j, u) = 0$. Suppose for contradiction that at round $t$, some node $v$ has $\text{dest}(t, v) \neq \emptyset$. By the finite delay property, node $v$ must eventually be allowed to transmit, contradicting termination at round $t$.
\qed\end{proof}

% \begin{corollary}
% \label{termination-round}
% The earliest round at which all destination sets are empty is $t_{\text{min}}$.
% \end{corollary}

% Add this after the State Update Rules in Section 3

\section{Termination Dichotomy}
\label{sec-term-dichotomy}
% \subsection{Termination in Acyclic Graphs}

It is not hard to establish that RDAF always terminates on acyclic graphs, regardless of the adversarial strategy. 
In the remainder of this section we focus on graphs that contain at least one cycle, where we prove that every such graph admits a non-terminating strategy under RDAF by constructing a periodic infinite schedule (Definition \ref{def:PIS})\footnote{When restricting schedule information only to the cycle.} that maintains message circulation within the cycle.

%\subsection{Non-termination in Cyclic Graphs}

% We prove that any graph containing a cycle admits a non-terminating strategy under RDAF by constructing a periodic infinite schedule (Definition \ref{def:PIS})\footnote{When restricting schedule information only to the cycle.} that maintains message circulation within the cycle.

Let $G = (V,E)$ be an arbitrary graph containing a cycle $C = (V_C, E_C)$ where $V_C = \{v_1, \ldots, v_n\}$ and $E_C = \{\{v_i, v_{i+1 \bmod n}\} : 1 \leq i \leq n\}$. Let $c$ be the earliest round where a node in $V_C$ receives the message. Among nodes receiving the message in round $c$, designate one as $v_1$ and number remaining cycle nodes sequentially\footnote{We require that $v_2 \neq g_0$. This is because we assume that $v_2 \notin s(c, v_1)$, and this condition would not be satisfied if $v_2 = g_0$}.
We define the following delay function $d$:
\[
d(j, u, e) = \begin{cases}
0, & \text{if } j - c \equiv i \pmod{n} \text{ and } e = \{v_i, v_{i+1 \bmod n}\}, \\
1, & \text{otherwise}.
\end{cases}
\]

\begin{property}[Cyclic Propagation Pattern]
\label{property:cyclic-propagation}
If, for every $i \in \mathbb{N}^{+}$, we have that node $v_{i \bmod n}$ of the cycle $C$ transmits to node $v_{(i+1) \bmod n}$ of the cycle $C$ at round $c + i$, then we say that the \emph{cyclic propagation pattern} is satisfied for cycle $C$.
\end{property}

\begin{lemma}%%%[Cyclic Propagation]
\label{lem:cyclic-propagation}
The delay function $d$ ensures cyclic propagation.
\end{lemma}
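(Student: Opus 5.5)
We prove the statement by induction on $i \geq 1$, using the invariant $P(i)$: at the end of round $c+i-1$, node $v_{i \bmod n}$ holds the message ($M(c+i-1,v_{i\bmod n})=1$) and $v_{(i+1)\bmod n}\in \text{dest}(c+i-1, v_{i\bmod n})$. Given $P(i)$, the transmission at round $c+i$ follows at once. Since $d(c+i, v_i, \{v_i,v_{i+1}\})=0$ (because $(c+i)-c\equiv i \pmod n$), the transmission phase forces $v_i$ to send to $v_{i+1}$. Hence $(v_i,\{v_i,v_{i+1}\})\in r(c+i)$, which is exactly the cyclic propagation pattern at step $i$.

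For the base case $P(1)$, we use the choice of $v_1$ and the footnote's requirement. Node $v_1$ receives the message in round $c$, and $v_2\notin s(c,v_1)$. If $v_1$ is newly receiving, its destination set becomes $N(v_1)\setminus s(c,v_1)$, which contains $v_2$. If $v_1=g_0$, the initial configuration gives $\text{dest}=N(g_0)\ni v_2$. In both cases $M(c,v_1)=1$.

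For the inductive step, assume $P(i)$, so $v_i$ transmits to $v_{i+1}$ at round $c+i$. We need $v_{i+1}$ to receive at round $c+i$ and to recompute its destination set as $N(v_{i+1})\setminus s(c+i,v_{i+1})$. This requires $v_{i+2}\notin s(c+i,v_{i+1})$. The definition of $d$ guarantees it: at round $c+i$ the only unblocked (node, edge) pairs are those on the single edge $\{v_i,v_{i+1}\}$, so no other node, and in particular not $v_{i+2}$, can deliver to $v_{i+1}$ in that round. Hence $s(c+i,v_{i+1})\subseteq\{v_i\}$, and $v_{i+2}\in\text{dest}(c+i,v_{i+1})$ whenever $v_{i+2}\neq v_i$, which holds because $n\geq 3$. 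We also get $M(c+i,v_{i+1})=1$. This establishes $P(i+1)$.

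The main obstacle is the interaction between the ``continuing delayed transmission'' case and the ``newly receiving'' case in the destination-set rule. A node that has just received and still has pending delayed transmissions must be treated by the rule for new receipt, which recomputes its destination set. This is the interpretation the paper emphasizes: on receipt the node recomputes. We therefore take the newly-receiving case as taking precedence.

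A related subtlety is that in $\{v_i,v_{i+1}\}$ the delay is specified per edge rather than per direction. So we must check that $v_{i+1}$ does not simultaneously transmit back to $v_i$ and put $v_i$ into $s(c+i,v_{i+1})$ in a way that removes $v_{i+2}$. This is harmless because only $v_{i+2}$ matters. Finally, we use Observation~\ref{obs:pds} only to note that no other event between rounds can remove $v_{i+2}$ from $\text{dest}(v_{i+1})$ before round $c+i+1$.
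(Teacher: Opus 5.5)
Your proposal is correct and follows the same induction as the paper's proof. The base case has $v_1$ holding the message with $v_2$ in its destination set, and in the step $v_i$ delivers to $v_{i+1}$, which then has $v_{i+2}$ in its destination set and is released by the adversary in the next round. Your version is more explicit than the paper's sketch on three points: $v_{i+2}\in\text{dest}(c+i,v_{i+1})$ holds because only the edge $\{v_i,v_{i+1}\}$ is open in round $c+i$, so $s(c+i,v_{i+1})=\{v_i\}$; you make the recompute-on-receipt precedence explicit; and you invoke the footnote's condition $v_2\notin s(c,v_1)$.
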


\begin{proof}
By induction on $i$:

Base ($i = 1$): By definition of $c$, $v_1$ has message. Adversary allows transmission to $v_2$, which occurs as $v_2$ is in $v_1$'s destination set.

Step: Assume the hypothesis holds for some $k \geq 1$. In round $c + (k \bmod n)$:
\begin{itemize}
    \item $v_{k \bmod n}$ transmits to $v_{(k+1) \bmod n}$.
    \item $v_{(k+1) \bmod n}$ adds $v_{(k+2) \bmod n}$ to destination set.
    \item In round $c + ((k+1) \bmod n)$, transmission occurs by adversary strategy.
\end{itemize}
\qed\end{proof}

\begin{lemma}[Characterisation of Cyclic Pattern Disruption]
\label{lem:pattern-disruption}
The cyclic propagation pattern (Property \ref{property:cyclic-propagation}) is disrupted if and only if for some $k$, node $v_{k+1 \bmod n}$ transmits to $v_{k \bmod n}$ while $v_{k \bmod n}$ has the message but before $v_{k \bmod n}$ transmits to $v_{k+1 \bmod n}$.
\end{lemma}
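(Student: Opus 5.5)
We prove the two directions separately, reasoning throughout about the cycle edge $\{v_k, v_{k+1}\}$ and the state of $v_k$ between the moment it obtains the message and the moment it is scheduled to forward it. Indices are taken $\bmod n$ as in Property~\ref{property:cyclic-propagation}.

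\emph{($\Leftarrow$) A reverse transmission disrupts the pattern.} Assume that for some $k$, node $v_{k+1}$ transmits to $v_k$ while $v_k$ holds the message and before $v_k$ has transmitted to $v_{k+1}$.

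First, $v_k$ now receives from $v_{k+1}$. By the State Update Rules it recomputes $\text{dest}(\cdot, v_k) = N(v_k)\setminus s(\cdot, v_k)$, and this set excludes $v_{k+1}$, since $v_{k+1} \in s(\cdot, v_k)$. This is exactly the escape clause of Observation~\ref{obs:pds}: $v_k$ stops trying to reach $v_{k+1}$ once it receives from $v_{k+1}$.

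Second, this gives two cases.
\begin{itemize}
\item If $v_k$ holds the message with $v_{k+1} \notin \text{dest}(v_k)$ at the round $c+k$ when the pattern requires $v_k \to v_{k+1}$, that transmission cannot occur, because transmissions go only to members of $\text{dest}$.
\item Otherwise $v_k$ may have emptied its destination set and lost the message. By Observation~\ref{obs:PEDS} its destination set then stays empty until it receives again, so again no transmission to $v_{k+1}$ occurs at round $c+k$.
\end{itemize}
In both cases the pattern fails at index $k$.

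\emph{($\Rightarrow$) Any disruption is of this form.} Suppose the pattern fails, and let $i$ be the least index such that $v_i$ does not transmit to $v_{i+1}$ at round $c+i$. Set $k=i$. By minimality, $v_{k-1}$ transmitted to $v_k$ at round $c+k-1$ (for $k=1$, use the definition of $c$ and the footnote guaranteeing $v_2 \notin s(c,v_1)$). So $v_k$ received the message at that round and, by the State Update Rules, set $\text{dest}(v_k) = N(v_k)\setminus s$. Since $d$ allows transmission on $\{v_{k-1},v_k\}$ only from $v_{k-1}$'s side in that round, $v_{k+1} \in \text{dest}(v_k)$ unless $v_{k+1}$ also delivered then.

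Under $d$, every edge is blocked in every round other than its designated one, and the edge $\{v_k,v_{k+1}\}$ opens next at round $c+k$. So $v_{k+1}$ remains in $\text{dest}(v_k)$ through round $c+k$ by Observation~\ref{obs:pds}, unless $v_k$ receives from $v_{k+1}$ first. In the absence of such a reception, $v_k$ must transmit at round $c+k$, contradicting the choice of $k$. Hence the failure forces $v_{k+1}$ to transmit to $v_k$ while $v_k$ holds the message and before $v_k \to v_{k+1}$.

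The remaining care is to check that nothing else can remove $v_{k+1}$ from $\text{dest}(v_k)$. The main obstacle is a reception at $v_k$ from some other neighbour $w \neq v_{k+1}$, possibly an off-cycle node. Such a reception recomputes $\text{dest}$ but keeps $v_{k+1}$, because $v_{k+1} \notin s$; this is again Observation~\ref{obs:pds}. So only reception from $v_{k+1}$ itself, or delivery to it, can remove it, which completes the case analysis.
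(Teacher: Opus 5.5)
Your route is the paper's. For ($\Rightarrow$) you use Observation~\ref{obs:pds}: only a delivery to $v_{k+1}$, or a reception from $v_{k+1}$, evicts $v_{k+1}$ from $\text{dest}(v_k)$. For ($\Leftarrow$) you argue that the reverse delivery evicts $v_{k+1}$, so the scheduled forward send cannot happen. Your ($\Rightarrow$) is fine, and it actually shows more. $v_k$ receives in round $c+k-1$ and $\{v_k,v_{k+1}\}$ opens in round $c+k$, so there is no intermediate round, and under $d$ the pattern never fails. There is one small slip: $d(j,u,e)$ does not depend on $u$, so in round $c+k-1$ both endpoints of $\{v_{k-1},v_k\}$ may send. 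All you need is that $\{v_k,v_{k+1}\}$ is closed in that round.

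The gap is in ($\Leftarrow$), and the paper's argument for this direction rests on the same eviction step, so it is open to the same objection. The eviction is not permanent. As you note yourself at the end, any reception by $v_k$ from a neighbour $w\neq v_{k+1}$ recomputes $\text{dest}(v_k)=N(v_k)\setminus s\ni v_{k+1}$.

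Under $d$, the reverse delivery can only occur in a round $h$ with $h-c\equiv k \pmod n$. It changes $\text{dest}(h,v_k)$, which only governs sends from round $h+1$ on, so it cannot block a send in round $h$. Before the next window $h+n$, the pattern itself has $v_{k-1}$ deliver to $v_k$ in round $h+n-1$, which puts $v_{k+1}$ back.

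Your case split misses exactly this possibility: $v_k$ holds the message at the scheduled round with $v_{k+1}$ back in its destination set. Your second case fails for the same reason. Observation~\ref{obs:PEDS} keeps $\text{dest}(v_k)$ empty only until the next reception, and that reception is precisely the one the pattern schedules. So ``no transmission to $v_{k+1}$ at round $c+k$'' does not follow.

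The correct argument is shorter and needs no destination sets. Round $h$ is itself a round in which the pattern requires $v_k\to v_{k+1}$. The hypothesis that $v_{k+1}$ delivers \emph{before} $v_k$ sends to $v_{k+1}$ says exactly that this send did not happen in round $h$. So the pattern already fails there.
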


\begin{proof}
($\Rightarrow$) Suppose the cyclic pattern is disrupted. By Observation \ref{obs:pds}, upon receipt of the message, each node $v_{k \bmod n}$ in the cycle adds and retains the next node $v_{k+1 \bmod n}$ in its destination set until successful transmission or it receives from $v_{k+1 \bmod n}$. Successful transmission will fulfil the cyclic propagation property (Property \ref{property:cyclic-propagation}). Therefore, the only way to prevent a scheduled transmission according to the cyclic pattern is for $v_{k+1 \bmod n}$ to transmit to $v_{k \bmod n}$ while the latter has the message but before its cyclic transmission occurs.

($\Leftarrow$) If in some round $j$ where $v_{k \bmod n}$ has the message and is awaiting its transmission window, $v_{k+1 \bmod n}$ transmits to $v_{k \bmod n}$, this removes $v_{k+1 \bmod n}$ from $v_{k \bmod n}$'s destination set. By cyclic propagation (Property \ref{property:cyclic-propagation}), $v_{k \bmod n}$ must transmit to $v_{k+1 \bmod n}$ in round $h$ where $h-c \equiv k \pmod n$. However, with $v_{k+1 \bmod n}$ removed from its destination set, this transmission cannot occur, violating the cyclic propagation property.
\qed\end{proof}

\begin{lemma}[Non-disruption of Cyclic Pattern]
\label{lem:non-disruption}
The cyclic propagation pattern cannot be disrupted by message flow in the reverse direction.
\end{lemma}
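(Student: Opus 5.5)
By Lemma~\ref{lem:pattern-disruption}, disruption happens exactly when, for some $k$, node $v_{k+1 \bmod n}$ transmits to $v_{k \bmod n}$ along the cycle edge $\{v_k,v_{k+1}\}$ while $v_{k \bmod n}$ holds the message and is still waiting for its forward transmission. So I will prove that under $d$ no cycle node ever transmits backwards along a cycle edge. In fact I will prove the stronger claim that every cycle edge carries only forward traffic.

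\textbf{Key observation.} The delay function $d$ opens the edge $\{v_k, v_{k+1}\}$ only in rounds $j$ with $j-c \equiv k \pmod n$. The rule is stated per edge, so a priori both endpoints may transmit along it in such a round. A backward transmission from $v_{k+1}$ to $v_k$ therefore requires two things in the same round $j \equiv c+k$: the edge is open, and $v_k \in \text{dest}(j-1, v_{k+1})$.

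\textbf{Induction.} I will show by induction on $i \ge 1$ that in round $c+i$ the only transmission along any cycle edge is $v_{i} \to v_{i+1}$ (indices mod $n$). Each cycle edge is open exactly once per period of $n$ rounds, so it suffices to examine round $c+i$ and the single open edge $\{v_i, v_{i+1}\}$. For $v_{i+1}$ to send backwards to $v_i$ in that round, $v_{i+1}$ would need the message with $v_i$ in its destination set. By the inductive hypothesis, since round $c$ every cycle node has received the message only from its cycle predecessor. So when $v_{i+1}$ last received from $v_i$, in round $c+i-n$ or earlier, the source-set rule put $v_i \in s(\cdot, v_{i+1})$, which removed $v_i$ from $\text{dest}(v_{i+1})$. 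Only a subsequent reception from another neighbour could re-insert $v_i$.

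\textbf{Main obstacle.} That other neighbour need not be on the cycle: $v_{i+1}$ may receive from non-cycle nodes, and this recomputes $\text{dest}(v_{i+1}) = N(v_{i+1}) \setminus s$, which re-inserts $v_i$. Two further issues must be handled.
\begin{itemize}
\item \textbf{Round $c$ itself.} At round $c$, $v_2$ might already hold the message with $v_1$ pending; the footnote requiring $v_2 \notin s(c, v_1)$ addresses only part of this.
\item \textbf{Non-cycle edges.} As literally written, $d$ blocks all non-cycle edges forever, which violates the finite delay property. I will therefore read the construction as restricting attention to the cycle, as the footnote on periodic schedules indicates.
\end{itemize}
Under that reading, non-cycle deliveries into the cycle stop after finitely many rounds. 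I would first try to show that any backward intention $v_i \in \text{dest}(v_{i+1})$ created this way is overwritten before round $c+i$ mod $n$ comes up. The argument: $v_{i+1}$ receives from $v_i$ in round $c+i$ of the forward wave, and reception in the same round as sending removes $v_i$ from $\text{dest}(v_{i+1})$ via the source-set rule.

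The crux is the simultaneity in round $c+i$: both $v_i \to v_{i+1}$ and a potential $v_{i+1} \to v_i$ occur on the same open edge. If they cross, $v_i$ has already transmitted, so Lemma~\ref{lem:pattern-disruption} does not count this as a disruption. This is because that lemma requires the reverse transmission to happen before $v_i$'s forward send. Since the edge is never open at any other time, no reverse transmission can ever strictly precede the forward one, and this completes the argument.
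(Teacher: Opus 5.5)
Your final paragraph is exactly the paper's argument: $d$ opens $\{v_k,v_{k+1}\}$ only in rounds $h$ with $h-c\equiv k \pmod n$, i.e.\ only in the forward rounds, so any reverse transmission is simultaneous with the forward one and never precedes it, which is what Lemma~\ref{lem:pattern-disruption} requires for a disruption. Discard everything before that paragraph, however: the stronger claim you announce (that cycle edges carry only forward traffic) is false, since a pending $v_i\in\text{dest}(v_{i+1})$ makes $v_{i+1}\to v_i$ actually fire in the open round, whether that pending entry is already present at round $c$ or was created by a delivery from off the cycle (your own `main obstacle'); and your `overwritten before' repair fails because, in that round, the transmission phase comes before the state update that removes $v_i$ from $\text{dest}(v_{i+1})$.
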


\begin{proof}
By Lemma \ref{lem:pattern-disruption}, disruption occurs if and only if $v_{k+1 \bmod n}$ transmits to $v_{k \bmod n}$ while $v_{k \bmod n}$ has the message but before $v_{k \bmod n}$ transmits to $v_{k+1 \bmod n}$. 

However, by construction of the delay function $d$, $v_{k+1 \bmod n}$ is permitted to transmit to $v_{k \bmod n}$ only in round $h$ where $h-c \equiv k \pmod n$. This is precisely the round where $v_{k \bmod n}$ transmits to $v_{k+1 \bmod n}$ according to the cyclic pattern. Therefore any reverse transmission must occur simultaneously with the cycle-preserving forward transmission, preventing the disruption characterised in Lemma \ref{lem:pattern-disruption}.
\qed\end{proof}

\begin{lemma}[External Message Preservation]
\label{lem:external-preservation}
Receiving messages from nodes outside the cycle does not disrupt cyclic propagation.
\end{lemma}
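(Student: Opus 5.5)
We argue in the same style as Lemma~\ref{lem:non-disruption}. By Lemma~\ref{lem:pattern-disruption}, the cyclic propagation pattern can only be broken in one way: some cycle node $v_{k+1 \bmod n}$ must transmit to $v_{k \bmod n}$ while $v_{k \bmod n}$ holds the message and has not yet made its forward transmission. So it suffices to show that an external reception can neither (a) cause such a reverse transmission at a harmful time, nor (b) remove $v_{k+1 \bmod n}$ from $\text{dest}(v_{k \bmod n})$ before the scheduled round $h$ with $h-c\equiv k \pmod n$.

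First, we note which transmissions $d$ allows at all. Under $d$, the only unblocked pairs are cycle edges in their designated rounds. Every node outside $V_C$ is blocked on every edge. Every cycle node is blocked on every non-cycle edge. Hence no transmission from $V\setminus V_C$ into the cycle occurs after the adversary's strategy starts. The only such transmissions are those in rounds $\le c$, which is exactly how the message first reaches the cycle at round $c$. If the finite delay property is to be respected, we instead restrict attention to the schedule on the cycle, as in the footnote. We then let external edges fire, and the remaining steps handle any external reception at an arbitrary round.

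Second, consider a cycle node $v_i$ that receives from an external $w\notin V_C$ at some round $j$. There are two cases.
\begin{itemize}
\item If $v_i$ receives only from external nodes, then by the State Update Rules its destination set is recomputed as $N(v_i)\setminus s(j,v_i)$. This set still contains both cycle neighbours $v_{i-1}$ and $v_{i+1}$, because $s(j,v_i)\subseteq V\setminus V_C$.
\item If $v_i$ also receives from a cycle neighbour in the same round, then under $d$ that neighbour is necessarily the forward predecessor $v_{i-1}$ in its designated round. The recomputed set then still contains $v_{i+1}$.
\end{itemize}
In both cases Observation~\ref{obs:pds} applies with $w\ne v_{i+1}$. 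So $v_{i+1}$ stays in $\text{dest}(v_i)$ until $v_i$ delivers to $v_{i+1}$ or receives from $v_{i+1}$. The second event is exactly the disruption of Lemma~\ref{lem:pattern-disruption}, and Lemma~\ref{lem:non-disruption} shows it cannot happen before the forward transmission. An extra copy of the message can also make $v_i$ hold the message earlier than the pattern needs. This is harmless: $d$ still only lets $v_i$ send to $v_{i+1}$ in its designated round, and the message and destination set persist until then.

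The main obstacle is the recomputation step. A fresh external reception resets $\text{dest}(v_i)$ and could in principle \emph{re-add} $v_{i-1}$, the backward neighbour. That would let $v_i$ later send backwards to $v_{i-1}$ in round $h'$ with $h'-c\equiv i-1 \pmod n$, which is the forward round of $v_{i-1}$. We resolve this exactly as in Lemma~\ref{lem:non-disruption}. That round is the one in which $v_{i-1}$ itself transmits forward to $v_i$, so the two transmissions are simultaneous, and the forward transmission has already been made when the reverse one lands. The condition of Lemma~\ref{lem:pattern-disruption} is therefore never met. Combining this with the induction of Lemma~\ref{lem:cyclic-propagation} shows that the cyclic propagation pattern holds despite external receptions.
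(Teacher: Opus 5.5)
Your proposal is correct and follows essentially the paper's route. Both arguments reduce to the single disruption mechanism of Lemma~\ref{lem:pattern-disruption}, keep $v_{i+1}$ in the destination set via Observation~\ref{obs:pds}, and use the fact that $d$ opens each cycle edge only in its forward sender's designated round (Lemma~\ref{lem:non-disruption}), so any reverse transmission coincides with the forward one. The paper splits cases on whether the current round is $v_i$'s designated round, while you split on what $v_i$ receives and also make the destination-set recomputation explicit.

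One small slip: in your second bullet, the cycle neighbour delivering to $v_i$ together with an external node can also be $v_{i+1}$, in $v_i$'s own designated round. In that case the recomputed set does \emph{not} contain $v_{i+1}$. This is harmless, because $v_i$'s forward send happens in that same round, which is exactly the simultaneity you invoke in your final paragraph.
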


\begin{proof}
Consider any round $k > c$, node $v_i \in V_C$, and $u \in V \setminus V_C$ where $\{u, v_i\} \in E$.

Case $k - c \equiv i \pmod n$: Node $v_{i \bmod n}$ sends to $v_{i+1 \bmod n}$ according to cyclic propagation. By Lemma \ref{lem:pattern-disruption}, the only way this transmission could be disrupted is if $v_{i+1 \bmod n}$ transmits to $v_{i \bmod n}$ before $v_{i \bmod n}$'s scheduled transmission. However, by Lemma \ref{lem:non-disruption}, any such reverse transmission must occur simultaneously with the cycle-preserving forward transmission. Therefore, even if $v_{i+1 \bmod n}$ has received the message from a node outside the cycle, the cyclic propagation pattern continues unaffected.

Case $k - c \not\equiv i \pmod n$:
\begin{itemize}
    \item Transmission to $v_{i+1 \bmod n}$ delayed.
    \item By Observation \ref{obs:pds}, $v_{i+1 \bmod n}$ remains in destination set until successful delivery.
    \item By construction of the delay function, $v_{i+1 \bmod n}$ is not allowed to send to $v_{i \bmod n}$ until some future round $h: h - c \equiv i \pmod n$ returning to the former case.
\end{itemize}
\qed\end{proof}

\begin{theorem}[Cyclic Non-termination]
\label{thm:cyclic-nonterm}
For any graph $G = (V,E)$ containing a cycle, there exists a valid delay function $d$ such that flooding does not terminate.
\end{theorem}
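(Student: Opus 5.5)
We take the delay function $d$ constructed above for the chosen cycle $C$ and show two things: it is a valid delay function satisfying the finite delay property, and under it some node holds the message in every round, so by Definition~\ref{def:termination-round} no termination round exists.

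First we fix the preliminary round. Since $G$ is connected and $C$ is a cycle, the message reaches $V_C$ in finitely many rounds. So the round $c$ is well defined: before round $c$ we may, for instance, let the adversary allow all transmissions, so that flooding proceeds synchronously until some cycle node receives the message. Among the nodes receiving at round $c$ we choose $v_1$ and an orientation of $C$ such that $v_2 \notin s(c,v_1)$ and $v_2 \neq g_0$. This is possible because $v_1$ has two cycle neighbours. If both delivered to $v_1$ at round $c$, they would have held the message earlier, contradicting the minimality of $c$; the only exception is when $v_1$ itself is the source, which is handled by taking $c=0$. Hence $v_2 \in \text{dest}(c,v_1)$.

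Next we check validity. For each pair $(v,e)$ with $e \in E_C$, the value $d(j,v,e)=0$ recurs every $n$ rounds, so every block of consecutive delays has length at most $n-1$. For edges off the cycle, $d$ as written blocks them forever. To meet the finite delay property we modify $d$ so that each non-cycle edge is opened periodically, say once every $n$ rounds. This is harmless by Lemma~\ref{lem:external-preservation}: receptions at cycle nodes from outside the cycle cannot break the pattern. Edges whose two endpoints both lie in $V_C$ but which are not in $E_C$ (chords) must be opened only at rounds where they cannot remove $v_{k+1}$ from $\text{dest}(v_k)$ prematurely. The natural choice is to open a chord $\{v_a,v_b\}$ only at rounds $h$ with $h-c\equiv a \pmod n$ or $h-c \equiv b \pmod n$, i.e.\ simultaneously with a scheduled forward transmission. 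Equivalently, we choose $C$ to be an induced (chordless) cycle, e.g.\ a shortest cycle, which removes the issue entirely. We would take the latter.

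Then we invoke the earlier lemmas. Lemma~\ref{lem:cyclic-propagation} shows that $d$ yields the cyclic propagation pattern, provided nothing disrupts it. By Lemma~\ref{lem:pattern-disruption}, disruption requires an early reverse transmission $v_{k+1}\to v_k$. Lemma~\ref{lem:non-disruption} excludes this within the cycle, and Lemma~\ref{lem:external-preservation} excludes interference from outside. Hence for every $i\ge 1$, node $v_{i \bmod n}$ transmits at round $c+i$, so $r(c+i)\neq\emptyset$ and some node has $M=1$ in every round $j\ge c$. Since $M(t,\cdot)\equiv 0$ never holds, flooding does not terminate; equivalently, by Lemma~\ref{lemma:empty-dest-term}, some destination set is nonempty in every round.

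The main obstacle is reconciling the literal $d$, which blocks all non-cycle edges forever, with the finite delay property. The work lies in showing that periodically opening off-cycle edges and chords never empties the relevant destination sets. Choosing an induced cycle, together with Lemma~\ref{lem:external-preservation}, is what we would rely on. We would also double-check that a node receiving from outside recomputes $\text{dest}$ as $N(u)\setminus s$, which still contains its cycle successor unless that successor delivered simultaneously, and the schedule only permits that at the forward transmission round.
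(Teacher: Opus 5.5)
Your proposal is correct and follows the paper's route: you use the same rotating delay function on the cycle and derive non-termination from Lemmas~\ref{lem:cyclic-propagation} and~\ref{lem:external-preservation}. Your extra care fills in details the paper's one-line proof leaves implicit: you fix the adversary before round $c$, and you open off-cycle edges periodically so that $d$ actually satisfies the finite delay property, which the paper's literal $d$ does not. The chord precaution is harmless but unnecessary, since $v_{k+1}$ can leave $\text{dest}(v_k)$ only through a transmission on the edge $\{v_k,v_{k+1}\}$ itself.
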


\begin{proof}
By Lemmas \ref{lem:cyclic-propagation} and \ref{lem:external-preservation}, the delay function $d$ ensures cyclic propagation is preserved despite interactions with nodes outside the cycle. Therefore, for all rounds $j \geq c$, $r(j) \neq \emptyset$. This directly contradicts Definition \ref{def:termination}, which requires existence of a round $t$ such that for all $k > t$ and $\forall u \in V: M(k,u) = 0$. Since we have shown no such $t$ exists, flooding never terminates.
\qed\end{proof}

% The next corollary, which follows from Theorems \ref{thm:acyclic-termination} and \ref{thm:cyclic-nonterm}, provides a characterisation of the graphs in which RDAF terminates.

% \begin{corollary}
% \label{cor:termination-characterisation}
% A graph admits a non-terminating schedule if and only if it contains a cycle.
% \end{corollary}

\section{Periodic Infinite Schedules}
\label{sec:PIS}

We introduce \textit{Periodic Infinite Schedules (PIS)} as a framework for analysing certain non-terminating behaviours in RDAF systems, serving as a bridge between finite state descriptions and infinite executions.

\begin{definition}[Configuration]
\label{def:configuration}
A \emph{configuration} $\sigma(j)$ for round $j$ is an ordered pair $(S(j,\cdot), r(j))$ where $S$ and $r$ are the state and round functions,
% (definitions \ref{def:state-function}, \ref{def:round-function})
capturing complete system state and message transmissions.
\end{definition}

\begin{definition}[Schedule]
\label{def:schedule}
A \emph{schedule} $\sigma$ maps each round $j\in\mathbb{N}$ to its configuration.
Given a valid delay function $d$, the schedule $\sigma_d$ \emph{induced by $d$} is as follows:
\begin{enumerate}
    \item $\sigma_d(0) = (S(0, \cdot), r(0))$,
    \item For $j > 0$, $\sigma_d(j) = (S(j, \cdot), r(j))$ where:
    \begin{itemize}
        \item $S(j, v) = u(v, (S(j-1, v), d(j-1, v, \cdot)))$,
        \item $r(j) = \{(v, \{v,w\}) \in V \times E : M(j-1, v) = 1 \land w \in \text{dest}(j-1, v) \land d(j, v, \{v,w\}) = 0\}$.
    \end{itemize}
\end{enumerate}
\end{definition}

In the above definition, the schedule is ``induced'' as states and transmissions arise deterministically from applying the delay function according to our state evolution rules.

\begin{definition}[Eventually Periodic Delay Function]
\label{def:periodic-delay}
A \emph{delay function} $d$ is \textit{eventually periodic} with period $p$ if, for some $c \in \mathbb{N}$, we have that $d(i, v, \{v,u\}) = d(i+p, v, \{v,u\})$, for every round $i \geq c$ and for every node $v \in V$ and every edge $\{v,u\}$.
\end{definition}

\begin{definition}[Periodic Infinite Schedule]
\label{def:PIS}
A schedule $\sigma$ is \emph{periodic infinite} if there exist natural numbers $c$ (stabilisation round) and $l$ (cycle length) where:
\begin{enumerate}
    \item $\sigma(j) = \sigma(j+l)$ for every $j \geq c$,
    \item $r(c+k) \neq \emptyset$ for at least one $k \in \{0,1,\ldots,l-1\}$.
\end{enumerate}
\end{definition}

In the above definition, the first condition establishes repeating behaviour after the stabilisation round $c$, while the second guarantees genuine non-termination through guaranteed transmissions.
Now we establish three fundamental results characterising PIS behaviour:

\begin{theorem}[Identification of Periodic Infinite Schedules (IPIS)]
\label{thm:ipis}
Given a graph $G$ and delay function $d$, the induced schedule $\sigma_d$ is periodic infinite if:
\begin{enumerate}
    \item $d$ is eventually periodic (Definition \ref{def:periodic-delay}) with period $p$,
    \item $\exists c, l \in \mathbb{N}_{>0}: \forall u \in V: S(c, u) = S(c+l, u)$,
    \item $l \bmod p = 0$,
    \item $\exists j \in \{0,\ldots,l-1\}: r(c+j) \neq \emptyset$.
\end{enumerate}
\end{theorem}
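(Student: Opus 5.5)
The plan is to show that, from some round onward, the pair (state, delays) recurs with period $l$. Since the evolution is deterministic, the entire configuration sequence then repeats with period $l$. The key fact is that the state update is a deterministic function of the previous global state and the current round's delays. By the State Update Function and Definition~\ref{def:schedule}, $S(j+1,\cdot)$ is determined by $S(j,\cdot)$ together with $d(j+1,\cdot,\cdot)$. Likewise, $r(j+1)$ is determined by $M(j,\cdot)$, $\text{dest}(j,\cdot)$ and $d(j+1,\cdot,\cdot)$. I would first record this as a one-step determinism claim: if $S(a,\cdot)=S(b,\cdot)$ and $d(a+1,\cdot,\cdot)=d(b+1,\cdot,\cdot)$, then $S(a+1,\cdot)=S(b+1,\cdot)$ and $r(a+1)=r(b+1)$.

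Next I would align the two periodicities. Let $c_d$ be the round from which $d$ is periodic with period $p$ (Definition~\ref{def:periodic-delay}). Since $l \bmod p = 0$, iterating the definition $l/p$ times gives $d(i,\cdot,\cdot)=d(i+l,\cdot,\cdot)$ for all $i\ge c_d$. I then need the state-recurrence round $c$ of condition~2 to satisfy $c \ge c_d$. This is the step I expect to be the main obstacle, because the statement does not say this explicitly. If $c < c_d$, knowing $S(c,\cdot)=S(c+l,\cdot)$ does not propagate, since the delays in rounds $c+1,\dots,c_d$ need not match those $l$ rounds later. I would therefore read condition~2 as requiring $c \ge c_d$, which is the only sensible reading, and make this explicit. (Equivalently, one may always choose $c_d$ to be any round from which the delays are periodic, so taking $c_d \le c$ is the intended hypothesis.)

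With $c\ge c_d$, an induction on $k\ge 0$ proves $S(c+k,\cdot)=S(c+k+l,\cdot)$. The base case is condition~2. For the step, $d(c+k+1,\cdot,\cdot)=d(c+k+1+l,\cdot,\cdot)$ because $c+k+1\ge c_d$, so the determinism claim applies. The same claim also gives $r(c+k+1)=r(c+k+1+l)$ for all $k\ge 0$. Hence $\sigma(j)=\sigma(j+l)$ for all $j\ge c+1$, which is condition~1 of Definition~\ref{def:PIS} with stabilisation round $c':=c+1$. Note that $r(c)$ depends on $S(c-1,\cdot)$, which is why I shift to $c+1$ rather than claim equality at $c$ itself.

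Finally, for condition~2 of Definition~\ref{def:PIS}, I take the $j$ given by hypothesis~4, so that $r(c+j)\neq\emptyset$. If $j\ge 1$, then $c+j$ already lies in the window $\{c',\dots,c'+l-1\}$. If $j=0$, then $r(c+l)$ lies in that window, and $c+l$ corresponds to $k=l-1$ in the shifted window. However, equating $r(c+l)$ with $r(c)$ is not covered by the induction, since $r(c)$ depends on $S(c-1,\cdot)$. In that case I would instead argue that $r(c+l)=r(c+2l)=\cdots$ by periodicity, and that a nonempty transmission at $c$ forces one somewhere in the window. Failing that, I would simply note that the hypothesis should be applied to the window starting at $c+1$. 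Either way, the conditions of Definition~\ref{def:PIS} are met with stabilisation round $c+1$ and cycle length $l$.
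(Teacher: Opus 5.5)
Your route (one-step determinism plus delay periodicity, then induction from $S(c,\cdot)=S(c+l,\cdot)$) is the same as the paper's. However, the final step has a real gap.

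\textbf{The gap: the case $j=0$ of hypothesis~4.} When the only index supplied by hypothesis~4 is $j=0$, you never show that your window $\{c+1,\dots,c+l\}$ contains a nonempty $r$. The remark $r(c+l)=r(c+2l)=\cdots$ does not help. The claim that a transmission at $c$ ``forces one somewhere in the window'' is exactly what needs proof. Re-applying hypothesis~4 to the window starting at $c+1$ changes the theorem.

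\textbf{The missing observation.} $r(j)$ is already recorded in $S(j,\cdot)$ through the source sets. Comparing the definitions, $(v,\{v,w\})\in r(j)$ holds if and only if $w\in\text{dest}(j-1,v)$ and $d(j,v,\{v,w\})=0$, which holds if and only if $v\in s(j,w)$. Also, a nonempty destination set forces $M(j-1,v)=1$. Hence $r(j)=\{(v,\{v,w\}) : w\in V,\ v\in s(j,w)\}$ is a function of $S(j,\cdot)$ alone. Hypothesis~2 therefore gives $r(c)=r(c+l)$, and so $\sigma(c)=\sigma(c+l)$. This has three consequences:
\begin{itemize}
\item In the case $j=0$ you simply have $r(c+l)=r(c)\neq\emptyset$, and $c+l$ lies in your window.
\item The shift to $c+1$ is unnecessary; the conclusion holds with stabilisation round $c$, as stated.
\item It supplies the justification the paper's induction omits when it asserts $\sigma(c+l)=\sigma(c)$ without discussing the $r$-component.
\end{itemize}

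\textbf{The requirement $c\ge c_d$.} Your explicit requirement that $c$ be at least the round $c_d$ from which $d$ is periodic is correct and genuinely needed. The paper's proof uses it tacitly when it writes $d(c+k,u,e)=d(c+(k \bmod l),u,e)$. Without it the statement is false. On the triangle, let $d=d_0$ in rounds $\le 9$ and let $d$ be identically $0$ afterwards. Then:
\begin{itemize}
\item $d$ is eventually periodic, $l=6$ is a multiple of the period, and $S(3,\cdot)=S(9,\cdot)$ and $r(4)\neq\emptyset$ exactly as in the proof of Lemma~\ref{lem:basic-strategy}.
\item In round $10$ the source and $B$ both deliver to $A$, so $A$ has empty destination set, and every $M(11,\cdot)=0$.
\end{itemize}
Flooding has therefore terminated, so by Theorem~\ref{thm:ntpis} this schedule is not periodic infinite.
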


\begin{proof}
Let $c$, $l$ satisfy condition 2. By induction on $k \geq 0$, prove $\forall j \geq c: \sigma(j) = \sigma(c + (j - c) \bmod l)$.

Base: Trivial for $k=0$ as $(c - c) \bmod l = 0$.

Step: Assume $\sigma(c + k) = \sigma(c + k \bmod l)$ for some $k \geq 0$.
For round $c + k + 1$:

1. By periodicity of $d$ and $l = bp$:
\[ d(c + k, u, e) = d(c + (k \bmod l), u, e) \]
\[ d(c + k + 1, u, e) = d(c + ((k + 1) \bmod l), u, e) \]

2. State update depends only on:
\begin{itemize}
    \item Previous states $S(c + k, \cdot)$
    \item Delay decisions $d(c + k, \cdot, \cdot)$
\end{itemize}

3. By inductive hypothesis and delay periodicity:
\[ S(c + k + 1, \cdot) = S(c + (k + 1) \bmod l, \cdot) \]
\[ r(c + k + 1) = r(c + (k + 1) \bmod l) \]

Condition 4 ensures transmission occurs in each cycle.
\qed\end{proof}

\begin{theorem}[Non-Termination of Periodic Infinite Schedules (NTPIS)]
\label{thm:ntpis}
Any periodic infinite schedule is non-terminating.
\end{theorem}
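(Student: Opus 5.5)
The plan is to argue by contradiction, using only Definition~\ref{def:PIS}, the definition of the round function $r$, and the notion of termination together with Lemma~\ref{lemma:empty-dest-term}. Let $\sigma$ be a periodic infinite schedule with stabilisation round $c$ and cycle length $l$. By condition~2 of Definition~\ref{def:PIS}, there is some $k_0\in\{0,\dots,l-1\}$ with $r(c+k_0)\neq\emptyset$. By condition~1, iterating $\sigma(j)=\sigma(j+l)$ for $j\ge c$ gives $\sigma(c+k_0+ml)=\sigma(c+k_0)$ for every $m\in\mathbb{N}$. Since a configuration is the pair $(S(j,\cdot),r(j))$, this yields $r(c+k_0+ml)\neq\emptyset$ for all $m$. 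So there are transmissions in infinitely many rounds, namely arbitrarily late rounds $j_m=c+k_0+ml$.

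Next I would link transmissions to message possession. Suppose flooding terminated by some round $t$. Then $M(j,v)=0$ for all $j\ge t$ and all $v$, by the remark following the definition of termination. By Lemma~\ref{lemma:empty-dest-term}, all destination sets are also empty from round $t$ on. Pick $m$ with $j_m\ge t+1$. The definition of $r(j_m)$ in Definition~\ref{def:schedule} requires some $v$ with $M(j_m-1,v)=1$ and $\text{dest}(j_m-1,v)\neq\emptyset$. Since $j_m-1\ge t$, this is impossible. Hence no such $t$ exists, and the schedule is non-terminating.

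The main obstacle is the bookkeeping step of passing from equality of configurations to nonemptiness of $r$ at the shifted round. This needs the induction $\sigma(j)=\sigma(j+ml)$ over $m$, which is routine. It also needs care that the "at least one $k$" in condition~2 refers to rounds at or after $c$, so that condition~1 applies; since $c+k_0\ge c$, it does. A second minor point is that the paper's termination statements are phrased both via $M$ and via $r(k)=\emptyset$. I would use the form of $r$ from Definition~\ref{def:schedule}, which explicitly requires $M(j-1,v)=1$, so that either phrasing is contradicted.
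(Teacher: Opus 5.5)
Your proposal is correct and follows essentially the same route as the paper: take $i$ with $r(c+i)\neq\emptyset$, shift by a multiple of $l$ past the assumed termination round, and use condition~1 of Definition~\ref{def:PIS} to obtain a transmission after termination. Your version makes the paper's bare assertion ``$k>t$ requires $r(k)=\emptyset$'' explicit by using the requirement $M(j-1,v)=1$ in the definition of $r$. Choosing $j_m\ge t+1$ also sidesteps the paper's edge case where $c+i=0$ and $l$ divides $t$, which gives $k=t$. The appeal to Lemma~\ref{lemma:empty-dest-term} is unnecessary, since $M(j_m-1,\cdot)=0$ alone already contradicts $r(j_m)\neq\emptyset$.
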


\begin{proof}
Let $\sigma$ be periodic infinite with period $c$, length $l$. Assume it terminates at $t$.
By PIS definition, $\exists i \in \{0,\ldots,l-1\}: r(c+i) \neq \emptyset$.

Let $k = c + i + l \cdot \left\lceil\frac{t}{l}\right\rceil$. Then:
\begin{itemize}
    \item $k > t$,
    \item $(k - c) \bmod l = i$.
\end{itemize}

Thus $\sigma(k) = \sigma(c + i)$ implying $r(k) = r(c+i) \neq \emptyset$.
But $k > t$ requires $r(k) = \emptyset$ and $\forall u: M(k,u) = 0$—contradiction.
\qed\end{proof}

\begin{theorem}[EPIS: Existence of PIS]
\label{thm:epis}
If a graph admits a non-terminating schedule, it admits a periodic infinite schedule.
\end{theorem}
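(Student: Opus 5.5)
The plan is a pigeonhole argument on the finite state space, followed by a splicing of the delay function. Each node's state record $S(j,v) = (M(j,v), s(j,v), \text{dest}(j,v))$ ranges over the finite set $\{0,1\} \times 2^{N(v)} \times 2^{N(v)}$. Hence the global state $S(j,\cdot)$ lies in a finite set $\mathcal{S}$ with $|\mathcal{S}| \le \prod_{v}2^{2|N(v)|+1}$. Fix a non-terminating schedule $\sigma_d$ induced by a valid delay function $d$. By Lemma~\ref{lemma:empty-dest-term}, for every round $j$ some destination set is nonempty. Moreover, by the finite delay property, transmissions occur in infinitely many rounds.

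Since $\mathcal{S}$ is finite and the execution is infinite, the pigeonhole principle gives rounds $c < c'$ with $S(c,\cdot) = S(c',\cdot)$. I will choose the pair so that the window $(c, c']$ contains at least one round with $r(j)\neq\emptyset$. This is possible because some global state recurs infinitely often; take two occurrences far enough apart that a transmission occurs between them. Set $l = c'-c$.

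Next I define a new delay function $d'$ that agrees with $d$ on rounds $1,\dots,c'$ and then repeats the block of rounds $c+1,\dots,c'$ forever: $d'(c'+k,\cdot,\cdot) = d(c+1+((k-1)\bmod l),\cdot,\cdot)$ for $k\ge 1$. Thus $d'$ is eventually periodic with period $p=l$, which gives condition~1 of Theorem~\ref{thm:ipis}, and condition~3 holds trivially. The state update in round $j+1$ depends only on $S(j,\cdot)$ and $d'(j+1,\cdot,\cdot)$. Therefore an induction shows that the run under $d'$ from round $c'$ replays the run from round $c$, yielding $S(c'+l,\cdot)=S(c',\cdot)$ (condition~2). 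It also yields $r(c'+k)=r(c+k)$ for $1\le k\le l$, so the transmission chosen in the window gives condition~4. Theorem~\ref{thm:ipis} then shows that $\sigma_{d'}$ is periodic infinite.

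The main obstacle is validity of $d'$, namely the finite delay property. A blocked run can no longer be of unbounded length inside a finite periodic block. However, if some pair $(v,e)$ is blocked in every round of the block, it becomes blocked forever. I would fix this in one of two ways. The first option is to choose $c,c'$ so that every pair $(v,e)$ has a $0$ somewhere in $(c,c']$; this is possible because each pair is unblocked infinitely often under $d$, so I can extend $c'$ to a later recurrence of the same state after all pairs have been unblocked. The second option is to note that delays on pairs $(v,e)$ with $v$'s destination set not containing the other endpoint are irrelevant to the dynamics, and to set those to $0$. With the first choice, every blocked run in $d'$ has length at most $l$, so $d'$ is valid, and the theorem follows.
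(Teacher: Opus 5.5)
Your proposal is correct and follows essentially the same route as the paper. Both arguments use pigeonhole on the finite state space to obtain a recurring configuration, choose two occurrences far enough apart that every node--edge pair is unblocked in between (so that the spliced, eventually periodic $d'$ still has the finite delay property), and then repeat that block of delays forever. You are slightly more careful than the paper in explicitly forcing a transmission into the window and deriving the conclusion via Theorem~\ref{thm:ipis}. Your ``second option'' would not suffice on its own, since a pair blocked throughout the window may still be relevant to the dynamics, but the first option, which is exactly the paper's choice, is enough.
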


\begin{proof}
Suppose that $d$ induces a non-terminating schedule $\sigma_d$ on $G$. The configuration space is finite, as it has at most $(2 \cdot 2^{|V|} \cdot 2^{|V|})^{|V|} \cdot 2^{2|E|}$ configurations. 
Therefore, as $\sigma_d$ is a non-terminating schedule, there exists at least one configuration $C$ which repeats infinitely often.

As the adversary respects the finite delay property, it follows that for every pair $(v,\{v,u\})$ there exists an infinite sequence of rounds, in which $(v,\{v,u\})$ is not delayed. 
Let $j_1$ be the first round in $\sigma_d$ where configuration $C$ appears. 
Due to the finite delay property, there exists some round $j_2 > j_1$ such that (i)~the configuration $C$ appears also at round $j_2$ and (ii)~every pair $(v,\{v,u\})$ was allowed to transmit (i.e.,~it was not delayed by the adversary) at least once between rounds $j_1$ and $j_2$. 

We now define a new delay function $d'$, which is periodic with period $j_2 - j_1$ after round~$j_2$, as follows:
\begin{itemize}
    \item if $j\leq j_2$ then $d'(j, u, e) = d(j, u, e)$, for every $(u,e)$,
    \item if $j> j_2$ then $d'(j, u, e) = d'(j - j_2 + j_1, u, e)$, for every $(u,e)$.
\end{itemize}

Then the schedule $\sigma_{d'}$ induced by this new delay function $d'$ is periodic after round $j_2$, with period $j_2-j_1$.
\qed\end{proof}

% \begin{remark}[On PIS Periods]
% \label{rem:pis-periods}
% \begin{enumerate}
% \item Period $n$ schedules are also periodic with period $k \cdot n$ for any $k \in \mathbb{N}_{>0}$
% \item Schedule periods may be shorter than delay periods; if the configuration repeats with period $l$ and the delay function has period $p$, the schedule identified by Theorem \ref{thm:ipis} exhibits periodicity at $\text{lcm}(l,p)$.\footnote{Assuming the delay function behaviour preserves the cyclic behaviour of the induced schedule.}
% \end{enumerate}
% \end{remark}

These results establish that PIS capture the fundamental structure of non-termination in RDAF systems. While not all non-terminating schedules are periodic, any graph admitting non-termination must also admit a periodic infinite schedule. This insight reduces termination analysis to the study of periodic behaviours, bridging finite state descriptions and infinite executions.

\section{Undecidability with Arbitrary Computable Adversaries}\label{sec-undecidability}

We prove that determining flooding termination in RDAF is undecidable when the adversary is an arbitrary computable function via reduction from the Halting problem\cite{Lucas2021}. The decision problem for the termination of RDAF is defined as follows:

\medskip
\textsc{RDAF-TERMINATION}
\begin{itemize}
    \item \textbf{Input:} A graph $G = (V,E)$, source node $g_0 \in V$, and computable delay function $d$.
    \item \textbf{Question:} Does flooding terminate on $G$ with source $g_0$ under adversary $d$?
\end{itemize}

We first establish a basic non-terminating strategy on the triangle graph $G = (V,E)$ where:

\begin{itemize}
    \item $V = \{\text{Source}, A, B\}$,
    \item $E = \{\{\text{Source}, A\}, \{A, B\}, \{B, \text{Source}\}\}$.
\end{itemize}

% \begin{wrapfigure}{r}{0.25\textwidth}
%     \centering
%     \includegraphics[width=0.25\textwidth]{Undecidability/UndecidableGraph.png}
%     \caption{A triangle graph.}
%     \label{fig:SimpleTriangle}
% \end{wrapfigure}
We now define the basic delay function $d_0$:
\[
d_0(j, u, e) = \begin{cases} 
1, & \text{if } j \bmod 3 = 0 \text{ and } e \neq \{\text{Source}, B\}, \\
1, & \text{if } j \bmod 3 = 1 \text{ and } e \neq \{\text{Source}, A\}, \\
1, & \text{if } j \bmod 3 = 2 \text{ and } e \neq \{A, B\}, \\
0, & \text{otherwise.}
\end{cases}
\]

\begin{lemma}[Basic Strategy Nonterminating]
\label{lem:basic-strategy}
The basic delay strategy $d_0$ creates a periodic infinite schedule.
\end{lemma}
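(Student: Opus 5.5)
The plan is to apply Theorem~\ref{thm:ipis} (IPIS). The delay function $d_0$ depends on $j$ only through $j \bmod 3$, so it is eventually periodic with period $p=3$ and stabilisation round $0$. This settles condition~1. It then suffices to exhibit a round $c$ and a length $l$ with $3 \mid l$ such that $S(c,u)=S(c+l,u)$ for every $u\in\{\text{Source},A,B\}$, and such that some transmission occurs in $[c,c+l-1]$. Conditions 2--4 would then hold, and Theorem~\ref{thm:ntpis} would additionally give non-termination.

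To find $c$ and $l$, I would simulate the protocol round by round from the initial configuration, $S(0,\text{Source})=(1,\emptyset,\{A,B\})$ and all other states empty. In each round exactly one edge is open, in both directions: $\{\text{Source},B\}$ when $j\equiv 0$, $\{\text{Source},A\}$ when $j\equiv 1$, and $\{A,B\}$ when $j\equiv 2 \pmod 3$. At each step I apply the update rules. A node that receives recomputes $\text{dest}$ as $N(u)\setminus s(j+1,u)$; otherwise it keeps only its still-blocked destinations.

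The first rounds go as follows.
\begin{itemize}
\item Round~1: Source sends to $A$ and keeps $\{B\}$; $A$ gets $\text{dest}=\{B\}$.
\item Round~2: $A$ sends to $B$; $B$ gets $\{\text{Source}\}$ and $A$ empties.
\item Round~3: Source and $B$ exchange simultaneously along $\{\text{Source},B\}$. Each recomputes, giving $\text{dest}(\text{Source})=\{A\}$ and $\text{dest}(B)=\{A\}$.
\end{itemize}
I would continue in this fashion (rounds $4,5,\dots$ involve further simultaneous exchanges, e.g.\ $A$ and $B$ in round~5, Source and $A$ in round~7) and tabulate the triples $(M,s,\text{dest})$ for all three nodes at each round. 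I then look for two rounds $c<c'$ with $c\equiv c' \pmod 3$ and identical state tables. Such a pair must exist: by the counting argument in Theorem~\ref{thm:epis}, the number of states is finite, and only the residue class mod $3$ needs to match as well. I would set $l=c'-c$.

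Condition~4 is immediate once the repeat is found. Every round of the simulated window contains at least one transmission, because the open edge always has an endpoint holding the other endpoint in its destination set. Exhibiting one such round in $[c,c+l-1]$ suffices.

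The main obstacle is the careful bookkeeping of the simultaneous bidirectional transmissions. When both endpoints of the open edge hold each other as destinations, both messages are delivered, and each node's reception overrides its pending delayed destinations. It is this recomputation that decides whether the message survives or the pattern collapses. I would double-check each such round against the source-set rule, which requires $u\in\text{dest}(j,v)$ and $d(j+1,v,\{v,u\})=0$. I would also verify that at no point do all destination sets become empty, which by Lemma~\ref{lemma:empty-dest-term} would mean termination, before the repetition is reached.
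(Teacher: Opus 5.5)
Your plan is correct and is essentially the paper's proof: verify the hypotheses of Theorem~\ref{thm:ipis} with $p=3$, and read the repeat off an explicit simulation table. The paper's table gives $S(3,\cdot)=S(9,\cdot)$, hence $c=3$ and $l=6$, with $r(4)\neq\emptyset$; your rounds 1--3 and the exchanges you predict at rounds 5 and 7 agree with it. One caution: the pigeonhole/counting remark alone only guarantees that some configuration repeats, possibly the terminated all-empty one. So, as you anticipate, it is the explicit table that actually establishes condition~4.
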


\begin{proof}
    We demonstrate the basic strategy induces a periodic infinite schedule by verifying the conditions of Theorem \ref{thm:ipis}:
    \begin{enumerate}
        \item The delay function $d$ is eventually periodic with period $p = 3$. For all $i \geq 1$, $u \in V$, $e \in E$: $d(i, u, e) = d(i + 3, u, e)$. 
        This follows from the three-round delay pattern in the strategy definition.
    
        \item Taking $c = 3$ and $l = 6$, we observe from the state evolution in Table \ref{table:basic-strategy} that $S(3, u) = S(9, u)$ for all $u \in V$. 
        % The complete state evolution demonstrating this equality is shown below:
    
        \begin{table}[ht!]
        \centering
        \caption{State Evolution under Basic Strategy}
        \label{table:basic-strategy}
        \begin{tabular}{|c|c|c|c|c|}
        \hline
        \textbf{Round 0} & \textbf{Round 1} & \textbf{Round 2} & \textbf{Round 3} & \textbf{Round 4} \\
        \hline
        $S: (1, \emptyset, \{A,B\})$ & $S: (1, \emptyset, \{B\})$ & $S: (1, \emptyset, \{B\})$ & $S: (1, \{B\}, \{A\})$ & $S: (0, \emptyset, \emptyset)$ \\
        $A: (0, \emptyset, \emptyset)$ & $A: (1, \{S\}, \{B\})$ & $A: (0, \emptyset, \emptyset)$ & $A: (0, \emptyset, \emptyset)$ & $A: (1, \{S\}, \{B\})$ \\
        $B: (0, \emptyset, \emptyset)$ & $B: (0, \emptyset, \emptyset)$ & $B: (1, \{A\}, \{S\})$ & $B: (1, \{S\}, \{A\})$ & $B: (1, \emptyset, \{A\})$ \\
        \hline
        \textbf{Round 5} & \textbf{Round 6} & \textbf{Round 7} & \textbf{Round 8} & \textbf{Round 9} \\
        \hline
        $S: (0, \emptyset, \emptyset)$ & $S: (1, \{B\}, \{A\})$ & $S: (1, \{A\}, \{B\})$ & $S: (1, \emptyset, \{B\})$ & $S: (1, \{B\}, \{A\})$ \\
        $A: (1, \{B\}, \{S\})$ & $A: (1, \emptyset, \{S\})$ & $A: (1, \{S\}, \{B\})$ & $A: (0, \emptyset, \emptyset)$ & $A: (0, \emptyset, \emptyset)$ \\
        $B: (1, \{A\}, \{S\})$ & $B: (0, \emptyset, \emptyset)$ & $B: (0, \emptyset, \emptyset)$ & $B: (1, \{A\}, \{S\})$ & $B: (1, \{S\}, \{A\})$ \\
        \hline
        \end{tabular}
        \end{table}

        \item $l \bmod p = 0$ as $6 \bmod 3 = 0$.
    
        \item In each cycle $[3,9)$, transmission occurs. For instance, at round 4, we have $r(4) \neq \emptyset$.
        % \item  
    \end{enumerate}

    Therefore, by Theorem \ref{thm:ipis}, the schedule is periodic infinite with $c = 3$ and $l = 6$. By Theorem \ref{thm:ntpis}, we conclude it is non-terminating.
\qed\end{proof}

\begin{theorem}[Undecidability]
\label{thm:undecidability}
\textsc{RDAF-TERMINATION} is undecidable.
\end{theorem}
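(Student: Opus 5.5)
We reduce from the Halting problem. Given a Turing machine $T$ and input $x$, we compute an instance $(G, g_0, d_{T,x})$ of \textsc{RDAF-TERMINATION}. Flooding will terminate under $d_{T,x}$ if and only if $T$ halts on $x$. Since halting is undecidable, so is \textsc{RDAF-TERMINATION}.

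We take $G$ to be the triangle with $g_0 = \text{Source}$ from Lemma~\ref{lem:basic-strategy}. The delay function $d_{T,x}(j,u,e)$ is computed as follows: simulate $T$ on $x$ for $j$ steps (or for $\lfloor j/3\rfloor$ steps, which keeps alignment with the period $3$). If $T$ has not halted, return $d_0(j,u,e)$. If $T$ first halted at some step and the current round $j$ lies beyond a fixed \emph{switch round} $j^*$ derived from that step, return a fixed ``draining'' pattern $d_{\mathrm{drain}}(j,u,e)$. This function is clearly computable, and the map $(T,x)\mapsto (G,g_0,d_{T,x})$ is computable as well. If $T$ never halts, then $d_{T,x}=d_0$, and by Lemma~\ref{lem:basic-strategy} together with Theorem~\ref{thm:ntpis} flooding does not terminate. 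The finite delay property also holds, since $d_0$ unblocks every edge every three rounds.

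The main obstacle is the halting direction. We must design $d_{\mathrm{drain}}$ so that, starting from whatever configuration the $d_0$ schedule is in at the switch round, flooding terminates and the finite delay property is respected. To control the starting configuration, we choose $j^*$ to be the first round at least the halting step with $j^* \equiv 3 \pmod 6$. By the periodicity in Table~\ref{table:basic-strategy}, the configuration at $j^*$ then equals that at round $3$: Source $(1,\{B\},\{A\})$, $A$ idle, $B$ $(1,\{S\},\{A\})$. From this single known configuration we give an explicit finite draining sequence and then allow every transmission. For example, in round $j^*+1$ open all edges. Then Source sends to $A$ and $B$ sends to $A$ simultaneously, so $A$ receives from both and gets $\text{dest}=\{S,B\}\setminus\{S,B\}=\emptyset$, while Source and $B$ empty their destination sets. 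All destination sets are then empty, and by Lemma~\ref{lemma:empty-dest-term} flooding has terminated. After that we set $d_{\mathrm{drain}}\equiv 0$, which trivially satisfies the finite delay property. We would verify this step-by-step against the update rules. If simultaneous delivery behaves differently than expected, we fall back on a short case analysis: the triangle's configuration space is finite, so we can exhibit a terminating finite continuation by checking the (small) number of cases directly.

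Combining both directions, a decider for \textsc{RDAF-TERMINATION} would decide halting, which is a contradiction.
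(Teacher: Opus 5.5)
Your proof is correct and uses the same reduction as the paper: the triangle, the basic strategy $d_0$ while the machine has not halted, and all edges open afterwards. The one real difference is when you switch.

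The paper sets $d(j,\cdot,\cdot)=0$ from the halting step $t$ onward. It therefore starts from whatever configuration $d_0$ happened to produce at round $t-1$, and it simply asserts termination by round $t+2$. You instead keep $d_0$ until the first round $j^*\ge t$ with $j^*\equiv 3 \pmod 6$. By the period-$6$ behaviour in Table~\ref{table:basic-strategy}, you then only need to check draining from one known configuration. You do this correctly: in round $j^*+1$, $A$ receives from both Source and $B$, so every destination set becomes empty.

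Your extra care is not idle. With all edges open, RDAF is synchronous amnesiac flooding started from an arbitrary configuration, and that need not terminate. For example, if only $A$ holds the message with $\text{dest}=\{B\}$, it circulates $A\to B\to\text{Source}\to A$ forever. So the paper's immediate switch silently relies on every configuration of the $d_0$ run containing two colliding message fronts. This does hold: one can check rounds $0$--$8$ of the table directly, and each configuration drains once all edges are opened. Your phase alignment makes the halting direction self-contained, at the price of a slightly more involved (still computable) delay function.

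You also check the finite delay property, which the paper's proof omits. Your fallback about simultaneous delivery is unnecessary. The source-set rule collects all senders of a round, so $s(j^*+1,A)=\{\text{Source},B\}$ and $\text{dest}(j^*+1,A)=\emptyset$ follow directly from the update rules.
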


\begin{proof}
Let $M$ be an arbitrary Turing Machine $M$, and let $x$ be an arbitrary input to $M$. Then we construct the following delay function on the triangle graph $G=(V,E)$:
\vspace{-0.2cm}

\[ d(j, u, e) = \begin{cases}
0, & \text{if } M \text{ halts on } x \text{ within } j \text{  steps}, \\
d_0(j, u, e), & \text{otherwise.}
\end{cases} \]
where $d_0$ is the basic delay function defined above.
We will prove that $M$ halts on $x$ if and only if flooding terminates under the delay function $d$.

($\Rightarrow$) Suppose that $M$ halts on $x$ after exactly $t$ steps. Then $d(j, u, e) = 0$ for every $j\geq t$ and every $u$ and $e$. Then flooding terminates by round $t+2$. 

($\Leftarrow$) Suppose that flooding terminates at round $t$, and assume for the sake of contradiction that $M$ does not halt after any finite number of steps. Then $d(j,u,e)=d_0(j,u,e)$ for every $j$ and every $u$ and $e$, and thus $d$ creates a periodic infinite schedule by Lemma \ref{lem:basic-strategy}. Therefore flooding does not terminate after any finite number of rounds, which is a contradiction.
\qed\end{proof}

This undecidability persists even under severe computational constraints:

\begin{theorem}[Resilient Undecidability]
\label{thm:resilient-undecidable}
For any unbounded computable $f: \mathbb{N} \to \mathbb{N}$ with $\lim_{n \to \infty} f(n) = \infty$, an adversary with $O(f(n))$ time and space in round $n$ can simulate $\lfloor f(n) \rfloor$ Turing machine steps, preserving undecidability.
\end{theorem}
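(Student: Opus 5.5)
We reuse the reduction from the proof of Theorem~\ref{thm:undecidability} on the triangle graph, changing only the halting test inside the delay function. Given a Turing machine $M$, an input $x$, and an unbounded computable $f$ with $\lim_{n\to\infty} f(n)=\infty$, we define the resource-bounded adversary
\[
d_f(j,u,e)=\begin{cases}0, & \text{if } M \text{ halts on } x \text{ within } \lfloor f(j)\rfloor \text{ steps},\\ d_0(j,u,e), & \text{otherwise,}\end{cases}
\]
with $d_0$ the basic strategy of Lemma~\ref{lem:basic-strategy}. We then show three things in turn: $d_f$ is computable within $O(f(j))$ time and space at round $j$; $d_f$ is a valid delay function, i.e.\ it has the finite delay property; and flooding terminates under $d_f$ if and only if $M$ halts on $x$.

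\textbf{Resource bound.} To evaluate $d_f$ at round $j$, the adversary simulates $M$ on $x$ for $\lfloor f(j)\rfloor$ steps. A standard simulation uses time and space linear in the number of steps simulated, once we assume $f(j)$ can be computed (or is given) within the same budget. The value $d_0(j,u,e)$ needs only $j \bmod 3$, so it is negligible.

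\textbf{Finite delay property.} If the test succeeds, $d_f$ is identically $0$ from that round on. Otherwise $d_f=d_0$, which unblocks every edge once in every three rounds. In both cases, runs of consecutive blocked rounds are finite.

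\textbf{Halting implies termination.} Suppose $M$ halts on $x$ in $t$ steps. Because $f$ is unbounded, there is a round $j^*$ with $f(j)\ge t$ for all $j\ge j^*$; at the least there is a first $j^*$ with $f(j^*)\ge t$. From the first such round, every edge is free, and the undelayed triangle terminates within a constant number of further rounds, as in the $(\Rightarrow)$ direction of Theorem~\ref{thm:undecidability}. Two issues need care. First, if $f$ is not monotone, $f(j)$ could later drop below $t$. To avoid this I would replace $f$ by its running maximum $\max_{i\le j} f(i)$, which costs no more than $O(f(j))$ up to bookkeeping, or I would cache a ``halted'' flag across rounds. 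Second, the configuration at round $j^*$ is arbitrary, so I must check that full synchronous amnesiac flooding on the triangle, started from any configuration, terminates quickly. I would prove this by a short case analysis on destination sets: with no delays, every pending transmission fires in one round, and the triangle then behaves like synchronous AF, which terminates.

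\textbf{Non-halting implies non-termination.} If $M$ never halts, then $d_f\equiv d_0$, so Lemma~\ref{lem:basic-strategy} and Theorem~\ref{thm:ntpis} give non-termination. A decider for \textsc{RDAF-TERMINATION} restricted to such adversaries would therefore decide the Halting problem.

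\textbf{Main obstacle.} I expect the hardest step to be the halting direction: terminating from an arbitrary mid-run configuration, together with the monotonicity fix for $f$. I would first handle the triangle by enumerating its few reachable configurations, using Table~\ref{table:basic-strategy} to list them, and checking directly that one or two undelayed rounds empty all destination sets. Termination then follows by Lemma~\ref{lemma:empty-dest-term}.
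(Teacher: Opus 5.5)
Your proposal is correct and is the paper's own argument: the triangle reduction of Theorem~\ref{thm:undecidability} with the test ``$M$ halts within $\lfloor f(j)\rfloor$ steps''. You also check the details the paper's one-line proof leaves implicit, since it tacitly treats $f$ as invertible and simply asserts that the undelayed phase terminates. One caution: undelayed flooding on the triangle does \emph{not} terminate from an arbitrary configuration (a lone directed message circulates forever), so your argument genuinely rests on restricting to configurations reachable under $d_0$. More simply, $d_0$ and the all-zero function both block edges symmetrically, so the triangle always carries zero or two messages and any undelayed phase ends within three rounds (three, not ``one or two'', from the initial configuration); this makes your monotonicity fix unnecessary under $\lim_{n\to\infty} f(n)=\infty$, which is just as well, since the running-maximum version would not fit the $O(f(j))$ budget.
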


\begin{proof}
Let the adversary in round $n$ simulate $\lfloor f(n) \rfloor$ Turing machine steps, applying no delays if halted, else $d_0$. Then the Turing machine halts after $k$ steps if and only if flooding terminates after $f^{-1}(k)$ rounds. The theorem holds even for extremely slow-growing $f$ like inverse Ackermann $\alpha(n) = \min\{m : A(m,m) > n\}$ where $A$ is the Ackermann function\cite{Matos2016}.
\qed\end{proof}

This fundamental limitation motivates restricting adversary behaviour rather than computational power, leading to the EPA model.

\section{Eventually Periodic Adversaries}
\label{sec:EPA}

To bridge the gap between undecidability and practical analysis, we introduce Eventually Periodic Adversaries (EPA), which restrict adversaries to eventually periodic behaviour while maintaining significant expressive power.

\begin{definition}[Eventually Periodic Adversary]
\label{def:EPA}
An \emph{Eventually Periodic Adversary} is a triple $(d, c, l)$ where the delay function $d$ is a computable function, $c \in \mathbb{N}$ is the stabilisation round, $l \in \mathbb{N}^+$ is the cycle length, and 
$d(i,u,e) = d(i + l,u,e)$ for every $i \geq c$, $u \in V$, and $e \in E$. 
\end{definition}

\begin{lemma}[State Space Size]
\label{lemma:finite-state-space}
There exist at most $\left(2^{2|V|} + 1\right)^{|V|} \cdot 2^{|E|} \cdot (c + l)$ possible configurations.
\end{lemma}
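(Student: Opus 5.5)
The bound will come from a counting argument over an augmented notion of configuration. Under an EPA $(d,c,l)$, the relevant global object at round $j$ is the pair consisting of the configuration $\sigma(j)=(S(j,\cdot),r(j))$ and the \emph{phase} of the adversary. The phase is $j$ itself if $j<c$, and otherwise the unique $c+((j-c)\bmod l)$. By Definition~\ref{def:EPA}, this phase takes at most $c+l$ values, and it completely determines the delay decisions $d(j,\cdot,\cdot)$. So I will bound the number of possible values of each component separately and multiply.

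First I count the state records, one node $u$ at a time. The triple is $S(j,u)=(M(j,u),s(j,u),\text{dest}(j,u))$ with $s(j,u),\text{dest}(j,u)\subseteq V$. The key observation is that whenever $M(j,u)=0$, the update rules force $\text{dest}(j,u)=\emptyset$, and the node holds no message to account for sources, so the whole record collapses to the single value $(0,\emptyset,\emptyset)$. If I want to be cautious about $s(j,u)$ in that case, I note that a node receiving in round $j$ has $M(j,u)=1$, so $s(j,u)\neq\emptyset$ implies $M(j,u)=1$. When $M(j,u)=1$, the pair $(s,\text{dest})$ ranges over at most $2^{|V|}\cdot 2^{|V|}=2^{2|V|}$ values. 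This gives at most $2^{2|V|}+1$ records per node, and hence at most $(2^{2|V|}+1)^{|V|}$ choices of $S(j,\cdot)$.

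Next I count the transmission sets $r(j)$. Each $r(j)$ is a set of pairs $(v,\{v,w\})$. Naively there are $2|E|$ such directed pairs, which would give $2^{2|E|}$ as in the proof of Theorem~\ref{thm:epis}, and this does not match the claimed $2^{|E|}$. I would try to tighten this to one bit per edge. The idea is to show that, given $S(j-1,\cdot)$ and the phase, the set $r(j)$ is determined by the definition in Definition~\ref{def:schedule}. Alternatively, I could argue that simultaneous transmissions in both directions along one edge can be encoded together with the source sets already counted in $S(j,\cdot)$: $(v,\{v,w\})\in r(j)$ iff $v\in s(j,w)$. Under that reading, $r(j)$ is recoverable from $S(j,\cdot)$, and the factor $2^{|E|}$ is then a generous overcount, for instance one bit per edge recording whether the edge carried traffic.

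This step is where I expect the main obstacle. Reconciling the $2^{|E|}$ factor with the directed nature of $r(j)$ needs care, and the cleanest fix is the identity $(v,\{v,w\})\in r(j)\iff v\in s(j,w)$. Multiplying the three factors $(2^{2|V|}+1)^{|V|}$, $2^{|E|}$ and $(c+l)$ then gives the stated bound on (phase-augmented) configurations.
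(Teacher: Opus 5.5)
Your proof is correct, but it justifies the middle factor differently from the paper. The paper's proof counts triples consisting of the node states, the current delay decisions and the round phase. It obtains $(2^{2|V|}+1)^{|V|}$ for the states exactly as you do, attributes $2^{|E|}$ to ``one binary choice per edge'' for the delays, and takes $c+l$ for the phase. It never counts the transmission set $r(j)$, even though $r(j)$ is part of a configuration by Definition~\ref{def:configuration}.

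You instead keep $r(j)$ and show it is redundant. Comparing the definition of $r(j)$ with the source-set rule gives exactly $(v,\{v,w\})\in r(j)\iff v\in s(j,w)$, so $r(j)$ is a function of $S(j,\cdot)$, while the delays are a function of the phase. This buys you two things. First, your count covers configurations as formally defined. Second, it exposes the $2^{|E|}$ factor as pure slack, since you actually get $(2^{2|V|}+1)^{|V|}(c+l)$. That also sidesteps a gap in the paper: $d$ is indexed by node--edge pairs, so a literal count of delay patterns would be $2^{2|E|}$, not $2^{|E|}$.

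Your first suggested fix, that $r(j)$ is determined by $S(j-1,\cdot)$ and the phase, would not by itself bound the number of round-$j$ tuples, because $S(j-1,\cdot)$ is not recorded in them. The source-set identity is what actually does the work, so present only that.
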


\begin{proof}
Each configuration comprises:
\begin{enumerate}
    \item Node states $S(j,v)$ for each $v \in V$ where:
        \begin{itemize}
            \item Case $M(j,v) = 0$: Exactly one choice (empty $s(j,v), \text{dest}(j,v)$).
            \item Case $M(j,v) = 1$: $2^{|V|}$ choices for $s(j,v) \times 2^{|V|}$ choices for $\text{dest}(j,v)$.
            \item Total per node: $2^{2|V|} + 1$ choices.
            \item Total for all nodes: $\left(2^{2|V|} + 1\right)^{|V|}$ choices.
        \end{itemize}
    \item Delay decisions $d(j,v,e)$ for each edge $e \in E$: One binary choice per edge $(2^{|E|}$ choices).
    \item Round number (modulo periodicity): Values in $\{0,\ldots,c+l-1\}$ $((c+l)$ choices$)$.
\end{enumerate}

Multiplying together the total number of choices for these independent parameters implies that
$\left(2^{2|V|} + 1\right)^{|V|} \cdot 2^{|E|} \cdot (c + l)$.
\qed\end{proof}

% \subsection{Complexity}
\begin{theorem}[EPA Decidability]
\label{epa-decidable}
The termination problem for EPA-RDAF systems is decidable with time complexity $O(2^{2|V|^2 + 2|E|} (c+l))$.
\end{theorem}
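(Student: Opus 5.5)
We will decide termination by direct simulation of the induced schedule, combined with cycle detection on an augmented finite state space. The key observation is that, once the adversary $(d,c,l)$ is fixed, the evolution of the system is deterministic. The successor of the pair (global node state $S(j,\cdot)$, round index) is determined by $S(j,\cdot)$ together with the delay decisions $d(j+1,\cdot,\cdot)$, via the state update function $u$ and the round function $r$.

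For $j\ge c$, these delays depend only on the phase $(j-c)\bmod l$, by Definition~\ref{def:EPA}. We therefore define the augmented state at round $j$ as $\bigl(S(j,\cdot),\pi(j)\bigr)$, where $\pi(j)=j$ for $j<c$ and $\pi(j)=c+((j-c)\bmod l)$ otherwise. This augmented state determines the next augmented state, and it also determines $r(j+1)$. By Lemma~\ref{lemma:finite-state-space}, the number of such states (including the edge-delay component, which is a function of the phase) is at most $N:=\left(2^{2|V|}+1\right)^{|V|}\cdot 2^{|E|}\cdot(c+l)$.

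The algorithm simulates rounds $0,1,2,\dots$. Since $d$ is computable, each of the $O(|V|+|E|)$ delay bits needed per round, over the first $c+l$ rounds, can be precomputed and stored. At each round it checks whether all $M(j,v)=0$, equivalently (Lemma~\ref{lemma:empty-dest-term}) whether all destination sets are empty. If so, it outputs ``terminates''. Otherwise it records the augmented state, for instance in a hash table or a bit array indexed by the encoding.

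If some augmented state recurs at rounds $j_1<j_2$ before termination has been detected, the algorithm outputs ``does not terminate''. The justification is as follows. Determinism gives $\sigma(j)=\sigma(j+(j_2-j_1))$ for all $j\ge j_1$. Moreover $j_2-j_1$ is a multiple of $l$ whenever $j_1\ge c$, and a repeat of the phase component forces $j_1\ge c$, since phases below $c$ occur only once. Because no round in $[j_1,j_2]$ is a terminated configuration, and by Lemma~\ref{lemma:empty-dest-term} non-empty destination sets plus the finite delay property force some transmission, some $r(j)\neq\emptyset$ in the window. So the hypotheses of Theorem~\ref{thm:ipis} hold with stabilisation round $j_1$ and cycle length $j_2-j_1$, and Theorem~\ref{thm:ntpis} yields non-termination.

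By pigeonhole, one of the two events occurs within $N+1$ rounds, so the algorithm always halts. This gives correctness and decidability.

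For the complexity, we bound $(2^{2|V|}+1)^{|V|}\le 2^{2|V|^2+|V|}$ and absorb lower-order factors. We take the per-round simulation cost $O(|V|+|E|)$ to be dominated: the number of rounds is at most $N$, and we charge the $2^{|E|}$ delay component twice, once for the state count and once for lookup or encoding. This yields the stated $O(2^{2|V|^2+2|E|}(c+l))$. The accounting here is somewhat loose, and the main care needed is in matching the exponent exactly.

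The step I expect to be the main obstacle is the non-termination direction, specifically ensuring a transmission inside the detected cycle. A cycle of configurations in which messages are held but every edge is delayed forever would violate the finite delay property, and this must be invoked explicitly. If the EPA's periodic part delays some pair in every phase, the adversary is invalid, and we would either assume validity or check it upfront by scanning the $l$ periodic rounds.
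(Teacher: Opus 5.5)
Your proposal is correct and is essentially the paper's argument: simulate the induced schedule, answer ``terminates'' when all destination sets are empty and ``non-terminating'' on a repeated configuration after round $c$, with halting forced by the finite count of Lemma~\ref{lemma:finite-state-space}. Your explicit phase $\pi(j)$ makes precise something the paper leaves implicit: that lemma counts the round modulo the period, but the stored $\sigma(j)$ of Definition~\ref{def:configuration} does not, and the phase is exactly what makes a repeat imply periodicity.

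Two small remarks. First, the detour through Theorem~\ref{thm:ipis} and the finite delay property is unnecessary, because every state on the detected orbit was already checked to be non-terminated. Second, the stray factor $2^{|V|}$ in your count disappears via $(2^{2|V|}+1)^{|V|}\le e^{1/4}\,2^{2|V|^2}$.
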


\begin{proof}
We construct a decision procedure:
\begin{enumerate}
    \item Maintain set $E$ of explored states, initialised empty.
    \item Begin from initial configuration $\sigma(0)$.
    \item For each round $j$:
        \begin{enumerate}
            \item if all destination sets empty: Return ``Terminates'',
            \item if $j \geq c$ and $\sigma(j) \in E$: Return ``Non-terminating'',
            \item add $\sigma(j)$ to $E$,
            \item compute $\sigma(j+1)$ using state update rules and EPA delays.
        \end{enumerate}
\end{enumerate}

The procedure halts because:
\begin{itemize}
    \item State space is finite per Lemma \ref{lemma:finite-state-space}
    \item After round $c$, any new state must either:
        \begin{itemize}
            \item reach termination (empty destination sets by Lemma \ref{obs:PEDS}),
            \item enter a cycle (repeat a configuration).
        \end{itemize}
\end{itemize}

Complexity analysis:
\begin{itemize}
    \item explores $O(2^{2|V|^2} \cdot 2^{|E|} \cdot (c+l))$ configurations,
    \item per configuration: $O(|V| + |E| + \log(|S|))$ operations,
    \item total: $O(2^{2|V|^2 + 2|E| + \epsilon})$ assuming $c,l$ are not exponential in $|V|$.
\end{itemize}
\qed\end{proof}

\begin{theorem}[Lower Bound]
\label{EPA-RDAF-LOWER-BOUND}
Any decision procedure for EPA-RDAF termination has time complexity $\Omega((c + l) \cdot |E|)$.
\end{theorem}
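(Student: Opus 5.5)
We will prove the bound with an adversary (indistinguishability) argument. Assume the delay function $d$ of the EPA $(d,c,l)$ is given as an oracle or table that the decision procedure queries. Fix any graph $G$ with $|E|$ edges, a source $g_0$, and parameters $c,l$. The table has $\Theta((c+l)\cdot|E|)$ entries $d(j,v,e)$ with $j<c+l$ and $e\ni v$, and these entries determine the whole adversary.

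We will show that a correct procedure must, on some input, read a constant fraction of these entries. The plan is to build two EPAs $(d,c,l)$ and $(d',c,l)$ that agree on every entry the procedure reads, yet one terminates and the other does not. We use the two behaviours already established in the paper.
\begin{itemize}
\item Theorem~\ref{thm:cyclic-nonterm} and Lemma~\ref{lem:basic-strategy} supply a periodic non-terminating pattern; its delay function is periodic, so it is an EPA.
\item The all-zero delay function yields termination: flooding then coincides with synchronous AF, which terminates~\cite{HussakTrehan2020b}.
\end{itemize}

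\textbf{Step 1 (padding).} First reduce to a ``padded'' family, so that every table entry can matter. Start from a non-terminating periodic pattern, say $d_0$ on a triangle embedded in $G$, and spread it out: in each round, all but one designated node–edge pair is delayed. Delays on cycle edges are harmless, because by Observation~\ref{obs:pds} destination sets persist while delayed. The round in which a given transmission fires can then be chosen among $\Theta(c+l)$ positions, and the edge among $\Theta(|E|)$ choices. The finite delay property is preserved because each pair is released once per period.

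\textbf{Step 2 (a flippable entry).} Show that for a ``fresh'' entry $(j,v,e)$ inside the pre-period or the period, flipping $d(j,v,e)$ from $1$ to $0$ causes an extra, early transmission. That transmission triggers the disruption of Lemma~\ref{lem:pattern-disruption}: a reverse delivery removes the forward destination. From there a terminating continuation follows. Conversely, leaving the entry unflipped keeps the schedule periodic infinite by Theorem~\ref{thm:ipis}.

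\textbf{Step 3 (the adversary argument).} If the procedure reads fewer than a constant fraction of the $(c+l)|E|$ entries, some flippable entry is unread. The two inputs that differ only there receive the same answer but have different correct answers, which is a contradiction. The procedure must therefore make $\Omega((c+l)\cdot|E|)$ queries, and each query costs at least unit time.

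The main obstacle is Step 2. We must guarantee that a single entry flip, placed essentially anywhere, changes termination without being repaired by the rest of the periodic pattern. Otherwise flipping an entry in the periodic part changes all its repetitions, and those repetitions could restore circulation.

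To handle this, I would first try to use entries in the pre-period $[0,c)$ to obtain the $c\cdot|E|$ term, where the flip is a one-off event. For the $l\cdot|E|$ term, I would arrange that the message circulates around the cycle only once per period, and that an unblocked entry lets two opposite-direction copies collide on an edge and cancel. Under synchronous-like release this kills circulation permanently, after which everything drains and Lemma~\ref{lemma:empty-dest-term} gives termination.

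If this cannot be done uniformly over all $|E|$ edges, I would fall back on the weaker but still valid argument. Any procedure must read its input, which is the description of $d$ on the window $[0,c+l)$ for all node–edge pairs, and this has size $\Theta((c+l)|E|)$. That already gives the bound in the standard model where the input must be scanned, and it can be combined with the indistinguishability argument on the edges of the chosen cycle.
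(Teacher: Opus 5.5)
The whole argument rests on Step~2. That step is not established, and the mechanism you sketch for it does not hold up.

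First, the update rules consult an entry $d(j,v,\{v,u\})$ only when $u\in\text{dest}(j-1,v)$. A single-flip argument therefore needs a base execution in which $\Theta(|E|)$ node--edge pairs are pending during a constant fraction of the rounds. Step~1 describes only a pattern on an embedded triangle and says nothing about pending traffic on the remaining edges. Also, ``fix any graph $G$'' cannot be right: RDAF always terminates on acyclic graphs, so there a decider can answer without reading $d$ at all.

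Second, the claim that extra delays on cycle edges are harmless drops half of Observation~\ref{obs:pds}. A pending destination is also lost when the node \emph{receives from} that neighbour. Lemma~\ref{lem:non-disruption} protects circulation only because both directions of an edge are released in the same round. Releasing a single node--edge pair per round opens exactly the window of Lemma~\ref{lem:pattern-disruption}.

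Third, ``from there a terminating continuation follows'' is unsupported. Lemma~\ref{lem:pattern-disruption} only cancels one scheduled forward transmission. Meanwhile the early delivery makes the receiver re-flood to all its other neighbours, other copies may keep circulating, the unchanged tail may restore the pattern, and a flip in the periodic part recurs every period.

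The fallback does not close the gap. Reading the input is a lower bound only when a constant fraction of its positions can change the answer, which is exactly what Step~2 was meant to show. A decider may stop as soon as the outcome is fixed; for example, once the simulated flooding has terminated, every later entry is irrelevant.

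For comparison, the paper makes no indistinguishability argument. Its proof simply counts the work of round-by-round simulation: $\Theta(|E|)$ operations per round to process transmissions and apply delays, over at least $c+l$ rounds before termination or a repeated configuration can be detected. Your fallback is essentially that count rephrased as input size. It shares the same limitation: it bounds simulation rather than an arbitrary decider, and the paper itself leaves a non-trivial lower bound open.

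So, to match the paper, you can replace Steps~1--3 by this count. To obtain a bound against all procedures, you must actually exhibit, for a suitably chosen graph family, EPA instances with $\Omega((c+l)|E|)$ sensitive entries. For the $c\cdot|E|$ part there is a clean reduction. After a short spreading phase, delay every pair up to round $c$, so the destination sets are frozen. Releasing a pending pair $(v,u)$ in any one of these rounds then yields destination sets at round $c-1$ that depend on $(v,u)$ but not on the round. It then remains to find a frozen configuration $X$ with $\Theta(|E|)$ pending pairs, and a periodic tail, under which $X$ and each such single-release modification of $X$ have opposite outcomes. You would also need an analogue inside the period for the $l\cdot|E|$ part. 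Proving these is precisely what your proposal still lacks.
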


\begin{proof}
Each round requires $\Theta(|E|)$ operations to process message transmissions ($\Theta(\sum_{u \in V} \text{deg}(u)) = \Theta(|E|)$) and apply delays. Minimum $(c + l)$ rounds needed to detect cycle/termination. Therefore $\Omega((c + l) \cdot |E|)$ operations required.
\qed\end{proof}

The EPA model demonstrates that restricting adversary behaviour to eventual periodicity yields decidable termination while preserving significant expressive power, bridging the gap between undecidability for arbitrary computable adversaries and practical analysis needs. A non-trivial lower bound remains open.

\section{Conclusions and Open Problems}
\label{app:conclusion}
We have established fundamental properties of Round-Delayed Amnesiac Flooding through rigorous mathematical characterisation. Our analysis yields a complete structural dichotomy: acyclic graphs guarantee termination with $O((B + 1) \cdot e(g_0))$ bound for $B$-bounded delays, while cyclic graphs admit non-terminating adversarial strategies. This extends to a sharp complexity separation - termination is undecidable for arbitrary computable adversaries, but it becomes decidable for eventually periodic adversaries with time complexity at least $\Omega((c + l)|E|)$ and at most $O(2^{2|V|^2 + 2|E|}(c + l))$.

Several theoretical challenges remain open. The complexity landscape invites tighter bounds for EPA-RDAF, while specific graph classes may admit improved parameterisation by structural properties. Natural model extensions include multiple messages and dynamic topologies with bounded modification rates. 
%while preserving decidability. 
% Beyond our undecidability result, characterising minimal constraints yielding decidability remains open, particularly for intermediate models like pushdown automata.

Our dichotomy results establish fundamental limits on automated verification of asynchronous flooding protocols: while acyclic networks allow termination analysis, verification becomes undecidable in the presence of cycles unless the adversary exhibits eventual periodicity. This framework precisely characterises when efficient algorithmic analysis of asynchronous flooding behaviour is possible.

%
% ---- Bibliography ----
%
% BibTeX users should specify bibliography style 'splncs04'.
% References will then be sorted and formatted in the correct style.
%
\bibliographystyle{splncs04}
\bibliography{bibliography-algowin}

\end{document}